\definecolor{red}{rgb}{1.0,0.0,0.0}
\definecolor{blue}{rgb}{0.0,0.0,1.0}
\definecolor{green}{rgb}{0.0,1.0,0.0}
\theoremstyle{plain}
\newtheorem{definition}{Definition}
\theoremstyle{plain}
\newtheorem{theorem}{Theorem}
\theoremstyle{plain}
\newtheorem{example}{Example}
\theoremstyle{plain}
\newtheorem{lemma}{Lemma}
\theoremstyle{nonumberplain} \theoremseparator{}
\newtheorem{proof}{Proof}
\def\p2p{peer-to-peer}
\def\P2p{Peer-to-peer}
\newcommand{\sig}{\ensuremath{\mathsf{Sig}}\xspace}
\newcommand{\rndadv}{\ensuremath{\mathsf{RG\textendash Adv}}\xspace}  
\newcommand{\statadv}{\ensuremath{\mathsf{ST\textendash Adv}}\xspace}  
\newcommand{\bstatadv}{\ensuremath{\mathsf{Bayesian\textendash ST\textendash Adv}}\xspace}  
\newcommand{\badv}{\ensuremath{\mathsf{Bayesian\textendash Adv}}\xspace}  
\begin{document}

\title{DREAM: DiffeRentially privatE smArt Metering \\ 
{\large Technical report}}

\author{Gergely Acs and Claude Castelluccia \\
INRIA Rhone Alpes, Montbonnot, France \\
\texttt{{gergely.acs, claude.castelluccia}@inrialpes.fr} \\ }
\date{}

\maketitle
\sloppy

 \setlength{\textfloatsep}{0.3cm}
\begin{abstract}
This paper presents a new privacy-preserving smart metering system.
Our scheme is private under the differential privacy model
and therefore provides strong and provable  guarantees. 
With our scheme, an (electricity) supplier can periodically collect data from smart
meters and derive aggregated statistics while learning only limited information about
the activities of individual households. 
For example, a supplier cannot 
tell from a user's trace when he watched TV or turned on heating. 
Our scheme is simple, efficient and practical. Processing cost is very limited:  
smart meters only have to add noise to their data and encrypt the results with 
an efficient stream cipher.
\end{abstract}

\newpage
\tableofcontents
\newpage

\section{Introduction}

Several countries throughout the world are planning to deploy smart meters in households in the very near future.
The main motivation, for governments and electricity suppliers, is to be able to match consumption with generation. 
Traditional electrical meters only measure total consumption on a given period of time (i.e., one month or one year). As such, 
they do not provide accurate information of when the energy was consumed. 
Smart meters, instead, monitor and report consumption in intervals of few minutes. 
They allow the utility provider to monitor, almost in real-time, consumption and possibly adjust generation and prices according 
to the demand. Billing customers by how much is consumed and at what time of day will probably change consumption habits to 
help matching consumption with generation. In the longer term, with the advent of smart appliances, it is expected that the smart grid 
will remotely control selected appliances to reduce demand.

\paragraph{Problem Statement:}
Although smart metering might help improving energy management, it creates many new privacy problems \cite{anderson10weis}.
Smart meters provide very accurate consumption data to electricity providers.
As the interval of data collected by smart meters decreases, 
the ability to disaggregate low-resolution data increases.  
Analyzing high-resolution consumption data, Nonintrusive Appliance Load Monitoring (NALM) \cite{hart92ieee} can be used to identify a remarkable number of electric
appliances (e.g., water heaters, well pumps, furnace
blowers, refrigerators, and air conditioners) employing exhaustive appliance signature libraries. Researchers are now focusing on the myriad
of small electric devices around the home such as personal computers,
laser printers, and light bulbs \cite{Lam07trans}. Moreover, it has also been shown that even simple off-the-shelf statistical tools can be used to extract complex usage patterns from high-resolution consumption data \cite{Molina10buildsys}.
This extracted information can be used to profile and monitor users
for various purposes, creating serious privacy risks and concerns.  
As data recorded by smart
meters is lowering in resolution, and inductive algorithms are quickly improving, it is urgent
to develop privacy-preserving smart metering systems that provide strong and provable 
guarantees.

%
%

\paragraph{Contributions:}
We propose a privacy-preserving smart metering scheme that guarantees users' privacy 
while still preserving the benefits and promises of smart metering.
Our contributions are many-fold and summarized as follows:
\begin{itemize}
\item We provide the first provably private and distributed solution for smart metering that
optimizes utility without relying on a third trusted party (i.e., an 
aggregator). 
We were able to avoid the use of a third trusted party by proposing a new distributed 
Laplacian Perturbation Algorithm (DLPA). 

In our scheme, smart meters are grouped into clusters, where a cluster is a group of hundreds 
or thousands of smart meters corresponding, for example,  to a quarter of a city. Each smart meter sends, at each sampling
period, their measures to the supplier. 
These measures are noised and encrypted such that the supplier can compute
the noised aggregated electricity consumption of the cluster, at each sampling period, without getting access to 
individual values.  The aggregate is noised just enough to provide differential privacy to each participating user,  
while still providing high utility (i.e., low error).
Our scheme is secure under the differential privacy model
and therefore provides strong and provable privacy guarantees. 
In particular, we guarantee that the supplier can retrieve information about any user consumption only up to a predefined threshold. 
Our scheme is simple, efficient and practical. It requires either one or two rounds of message exchanges
between a meter and the supplier. Furthermore, processing cost is very limited:  smart meters only have 
to add noise to their data and encrypt the results with an efficient stream cipher. Finally, our scheme is robust against
smart meter failures and malicious nodes. More specifically, it is secure even if an $\alpha$ fraction of all nodes 
of a cluster collude with the supplier, where $\alpha$ is a security parameter.

\item We provide a detailed analysis of the security and performance of our proposal.
The security analysis is performed analytically. The performance, which is evaluated using
the utility metric, is performed using simulation. We implemented
a new electricity trace generation tool based on \cite{richardson10} which generates one-minute resolution synthetic consumption data of different households. 

\end{itemize}

\section{Related Work}

Several papers addressed the privacy problems of smart metering in the recent past \cite{kalogridis10smartgridcommA,Molina10buildsys,anderson10weis,anderson10key,anderson10smartgridcomm,bohli10icc,danezis10tc,garcia10stm}. However, only a few of them have proposed technical solutions to protect users' privacy. 
In \cite{anderson10weis,anderson10smartgridcomm}, the authors discuss the different security aspects of smart metering and the conflicting interests among stakeholders. 
The privacy of billing is considered in \cite{danezis10tc,Molina10buildsys}. These techniques uses zero-knowledge proofs to ensure that the fee calculated by the user is correct without disclosing any consumption data.   

Seemingly, the privacy of monitoring the sum consumption of multiple users may be solved by simply anonymizing individual measurements like in \cite{kalogridis10smartgridcommA} or using some mixnet. However, these ``ad-hoc'' techniques are dangerous and do not provide any real assurances of privacy. Several prominent examples in the history have shown that ad-hoc methods do not work \cite{korolova09www}. Moreover, 
these techniques require an existing trusted third party who performs anonymization.    
The authors in \cite{bohli10icc} perturb the released aggregate with random noise and use a different model from ours to analyze the privacy of their scheme. However, they do not encrypt individual measurements which means that the added noise must be large enough to guarantee reasonable privacy. As individual noise shares sum up at the aggregation, the final noise makes the aggregate useless.  
In contrast to this, \cite{garcia10stm} uses homomorphic encryption to guarantee privacy for individual measurements. However, the aggregate is not perturbed which means that it is not differential private. 

The notion of differential privacy was first proposed in \cite{dwork06tcc}. 
The main advantage of differential privacy over other privacy models is that it does not specify the prior knowledge of the adversary and provides rigorous privacy guarantee if each users' data is statistically independent \cite{kifer11sigmod}.
Initial works on differential privacy focused on the problem how a trusted curator (aggregator), who
collects all data from users, can differential privately release statistics. By contrast,
our scheme ensures differential privacy even if the curator is untrusted. 
Although \cite{dwork06eurocrypt} describes protocols for generating shares of random noise which is secure against malicious participants, it requires communication between users and it uses expensive secret sharing techniques resulting in high overhead in case of large number of users. 
Similarly, traditional Secure Multiparty Computation (SMC)
techniques \cite{GoldreichMulitParty} \cite{Cramer01eurocrypt} also require interactions between users. All these solutions are impractical for resource constrained smart meters where all the computation is done by the aggregator and users are not supposed to communicate with each other.

Two closely related works to ours are \cite{rashtogi10sigmod} and \cite{shi11ndss}.
In \cite{rashtogi10sigmod}, the authors propose a scheme to differential privately aggregate sums over multiple slots when the aggregator is untrusted. However, they use the threshold Paillier cryptosystem \cite{fougue00paillier} for homomorphic encryption which is much more expensive compared to  \cite{cc05mobiquitous} that we use.
They also use different noise distribution technique which requires several rounds of message exchanges between the users and the aggregator. 
By contrast, our solution is much more efficient and simple: it requires only a single message exchange if there are no node failures, otherwise, we only need one extra round. In addition, our solution does not rely on expensive public key cryptography during aggregation.  

A recent paper \cite{shi11ndss} proposes another technique to privately aggregate time series data. This work differs from ours as follows: 
(1) they use a Diffie-Hellman-based encryption scheme, whereas our construction is based on a more efficient construction that only use modular
 additions. This approach is better adapted to resource constrained devices like smart meters.   
(2) Although \cite{shi11ndss} does not require the establishment (and storage) of pairwise keys between nodes as opposed to our approach, 
it is unclear how \cite{shi11ndss} can be extended to tolerate node and communication failures. By contrast, our scheme is more robust, as the encryption key of non-responding nodes is known to other nodes in the network that can help to recover the aggregate. 
(3) Finally, \cite{shi11ndss} uses a different noise generation method from ours, but this technique only satisfies the relaxed $(\varepsilon, \delta)$-differential privacy definition. Indeed, in their scheme, each node adds noise probabilistically which means that none of the nodes add noise with some positive probability $\delta$. 
Although $\delta$ can be arbitrarily small, this also decreases the utility. By contrast, in our scheme, $\delta=0$ while ensuring nearly optimal utility.

\section{The model}

\subsection{Network model}

The network is composed of four major parts: the \emph{supplier/aggregator}, the \emph{electricty distribution network}, the \emph{communication network}, and the \emph{users} (customers). 
Every user is equipped with an electricity smart meter, which measures the electricity consumption of the user in every $T_{p}$ long period, and, using the communication network,  
sends the measurement to the aggregator at the end of every slot (in practice, $T_{p}$ is around 1-30 minutes). Note that the communication and distribution network can be the same (e.g., when PLC technology is used to transfer data). 
The measurement of user $i$ in slot $t$ is denoted by $X_{t}^{i}$. The consumption profile of user $i$ is described by the vector $(X_{1}^{i}, X_{2}^{i},  \ldots)$, where the measurements of different users
are statistically independent.  
Privacy directly correlates with
$T_{p}$; finer-grained samples means more accurate profile, but also entails weaker privacy.
The supplier is interested in the sum of all measurements in every slot (i.e., $\sum_{i=1}^{N}X_{t}^i \stackrel{\mathsf{def}}{=} \mathbf{X}_{t}$).

As in \cite{bohli10icc}, we also assume that smart meters are trusted devices (i.e., tamper-resistant) which can store key materials and perform crypto computations. This realistic assumption has also been confirmed in \cite{anderson10smartgridcomm}. We assume that each node is configured with a private key and gets the corresponding certificate from a trusted third party. For example, each country might have a third party that generates these certificate and can additionally generate the ``supplier'' certificates to supplier companies \cite{anderson10smartgridcomm}.
As in \cite{anderson10smartgridcomm}, we also assume that public key operations are employed only for initial key establishment, probably when a meter is taken over by a new supplier. Messages exchanged between the supplier and the meters are authenticated using pairwise MACs \footnote{Please refer to \cite{anderson10key} for a more detailed
discussion about key management issues in smart metering systems.}.  Smart meters are assumed to have bidirectional communication channel (using some wireless or PLC technology) with the aggregator, but the meters cannot communicate with each other. 
We suppose that nodes may (randomly) fail, and in these cases, cannot send their measurements to the aggregator. However, nodes are supposed to use some reliable transport protocol to overcome the transient communication failures of the channel. Finally, we note that smart meters also allow the supplier to perform fine-grained billing based on time-dependant variable tariffs. Here, we are not concerned with the privacy and security problems of this service. Interested readers are referred to \cite{danezis10tc,Molina10buildsys}.

\subsection{Adversary model}
\label{sec:adv_model}

In general, the objective of the adversary is 
to infer detailed information about household
activity (e.g, how many people are in home and 
what they are doing at a given time).
In order to do that, it needs to extract complex usage patterns of appliances which 
include the level of power consumption, periodicity, and duration. 
It has been shown in \cite{Molina10buildsys} that different data mining techniques can be easily 
applied to a raw consumption profile to obtain this information.

%
%

In terms of its capability, we distinguish three types of adversary. The first is the 
a \emph{honest-but-curious (HC) adversary}, who attempts to obtain private 
information about a user, but it follows the protocol faithfully and do
not provide false information \cite{Molina10buildsys}. It only uses the (non-manipulated) 
collected data.

The \emph{dishonest-but-non-intrusive (DN) adversary} may not 
follow the protocol correctly and is allowed
to provide false information to manipulate the collected data. 
Some users can also be malicious and collude even 
with the supplier to collect information about honest users. 
However, the DN adversary is not allowed to access and modify the distribution network
to mount attacks. In particular, he is not allowed to install wiretapping devices to eavesdrop on the victim's consumption.

Likewise the DN adversary, the strongest \emph{dishonest-and-intrusive (DI) adversary} may not follow
all protocols either, but that can, in addition, invade the distribution network to gather more information about clients. In other words, the DI adversary can monitor the electricity consumption of the clients by installing meters on the power line that is outside of the client's control (like outside from his household). 

We suppose that all types of adversary can have any kind of extra knowledge about honest users, beyond the collected measurements, which might help to infer private information about them. For instance, it can observe their daily activities\footnote{Similarly to monitoring neighbors. Indeed, neighbors can also be malicious users, which is included in our model.}, or obtain extra information by doing personal interviews, surveys, etc. 

\subsection{Privacy model}
We use differential privacy \cite{dwork06tcc} that models the adversary described above. 
In particular, differential privacy guarantees that a user's privacy should not be threatened substantially more if he provides his measurement to the supplier. 

\begin{definition}[$\varepsilon$-differential privacy]
\label{def:diff_priv_stand}
An algorithm $\mathcal{A}$ is $\varepsilon$-differential private, if for all
data sets $D_{1}$ and $D_{2}$, where $D_{1}$ and $D_{2}$ differ in at most a single user,
and for all subsets of possible answers $S \subseteq \mathit{Range}(\mathcal{A})$,
$$
P(\mathcal{A}(D_{1}) \in S) \leq e^{\varepsilon} \cdot P(\mathcal{A}(D_{2}) \in S) 
$$
\end{definition}

Differential private algorithms produce indistinguishable outputs for similar inputs (more precisely, differing by a single entry), and thus, the modification of any single user's data in the dataset changes the probability of any output only up to a multiplicative factor $e^\varepsilon$. The parameter $\varepsilon$ allows us to control the level of privacy. Lower values of $\varepsilon$ implies stronger privacy, as they restrict further the influence of a user's data on the output. 
Note that this model, if users' data are independent, guarantees privacy for a user even if all other users' data is known to the adversary (e.g., it knows all measurements comprising the aggregate except the target user's), like when $N-1$ out of $N$ users are malicious and cooperate with the supplier. 

\begin{example}[Illustration of $\varepsilon$-differential privacy]
There is a dataset $D$ containing a list of patients' entries. Each entry has an attribute that indicates whether the corresponding patient has cancer or not. Suppose an $\varepsilon$-differential private query $\mathcal{A}$ that returns the sanitized number of patients in $D$ that have cancer. We assume that the adversary knows the exact number of cancer patients, $x$, before adding Alice to $D$, and wants to learn from the  random output $O$ of $\mathcal{A}(D\cup\{\text{Alice}\})$ whether Alice has cancer or not. The adversary has no prior knowledge about Alice (i.e., the probability that Alice has cancer is 0.5 before accessing $O$). The adversary either infers Alice as a cancer or a non-cancer patient.   
The success probability of this inference has a maximum of $\frac{1}{1+e^{-\varepsilon}}$ (and $\geq 0.5$)\footnote{Let $A$ denote the event that Alice has cancer. Using a bayesian reasoning, $P(A|O) = \frac{P(O|A)}{P(O|A)+P(O|\overline{A})} = \frac{P(\mathcal{A}(x+1)=O)}{P(\mathcal{A}(x+1)=O)+ P(\mathcal{A}(x)=O)} \leq \frac{1}{1+e^{-\varepsilon}}$, where we used that $P(A)=P(\overline{A})$ and  $e^{-\varepsilon} \leq \frac{P(\mathcal{A}(x)=O)}{P(\mathcal{A}(x+1)=O)} \leq e^\varepsilon$. Moreover, the optimal inference strategy is the maximum likelihood decision: the adversary infers Alice as a cancer patient if $P(\mathcal{A}(x+1) = O) > P(\mathcal{A}(x) = O)$ or with probability 0.5 if $P(\mathcal{A}(x+1) = O) = P(\mathcal{A}(x) = O)$, otherwise as a non-cancer patient.}. For example, the values 2, 1, 0.5, 0.1 of $\varepsilon$ yield correct inferences with a maximum probability of 0.88, 0.73, 0.62, 0.52, resp.   
\end{example}

The definition of differential privacy also maintains a \emph{composability property}: the composition of differential private algorithms remains differential private and their $\varepsilon$ parameters are accumulated. In particular, a protocol having $t$ rounds, where each round is individually $\varepsilon$ differential private, is itself $t\cdot\varepsilon$ differential private.


\subsection{Output perturbation: achieving differential privacy}
\label{sec:perturbation}

Let's say that we want to publish in a differentially private way the output of a function $f$.
The following theorem says that this goal can be achieved by perturbing the output of $f$; simply adding a random noise to the value of $f$, where the noise distribution is carefully calibrated to the global sensitivity of $f$, results in $\varepsilon$-differential privacy. The global sensitivity of a function is the maximum "change" in the value of the function when its input differs in a single entry. For instance, if $f$ is the sum of all its inputs, the sensitivity is the maximum value that an input can take.

\begin{theorem}[Laplacian Perturbation Algorithm (LPA) \cite{dwork06tcc}]
\label{thm:histo}
For all $f : \mathbb{D} \rightarrow \mathbb{R}^{r}$, the following mechanism $\mathcal{A}$ is $\varepsilon$-differential private: $\mathcal{A}(D)= f(D) + \mathcal{L}(S(f)/\varepsilon)$, where $\mathcal{L}(S(f)/\varepsilon)$ is an independently generated random variable following the Laplace distribution and $S(f)$ denotes the global sensitivity of $f$\footnote{Formally, let $f : \mathbb{D} \rightarrow \mathbb{R}^{r}$, then the global sensitivity of $f$ is $S(f) = \max ||f(D_{1}) - f(D_{2})||_{1}$, where $D_{1}$ and $D_{2}$ differ in a single entry and $|| \cdot ||_{1}$ denotes the $L_{1}$ distance.}.
\end{theorem}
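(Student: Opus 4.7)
The plan is to verify the multiplicative bound in Definition~\ref{def:diff_priv_stand} by directly comparing output densities, exploiting the fact that Laplace noise is added coordinate-wise. First I would write the density of $\mathcal{L}(b)$ on $\mathbb{R}^{r}$ (with $b = S(f)/\varepsilon$) as $p(y) = (2b)^{-r}\exp(-\|y\|_{1}/b)$, which follows from independence of the $r$ Laplace components. The key observation is that the exponent involves the $L_{1}$ norm, matching precisely the norm used in the definition of $S(f)$.

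Next, for datasets $D_{1}$ and $D_{2}$ differing in a single entry, the density of $\mathcal{A}(D_{j})$ at a point $z \in \mathbb{R}^{r}$ equals $p(z - f(D_{j}))$. I would compute the pointwise ratio of densities, which simplifies to $\exp\!\bigl((\|z - f(D_{2})\|_{1} - \|z - f(D_{1})\|_{1})/b\bigr)$, and then apply the reverse triangle inequality $\|z - f(D_{2})\|_{1} - \|z - f(D_{1})\|_{1} \leq \|f(D_{1}) - f(D_{2})\|_{1} \leq S(f)$. With $b = S(f)/\varepsilon$, this bounds the pointwise density ratio by $e^{\varepsilon}$.

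To conclude, I would integrate the pointwise inequality $p(z - f(D_{1})) \leq e^{\varepsilon} \cdot p(z - f(D_{2}))$ over any measurable $S \subseteq \mathbb{R}^{r}$ to obtain $P(\mathcal{A}(D_{1}) \in S) \leq e^{\varepsilon} \cdot P(\mathcal{A}(D_{2}) \in S)$, which is exactly the $\varepsilon$-differential privacy condition.

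The only subtle point, and really the single step that makes the theorem work, is the compatibility between the noise distribution and the sensitivity norm: independent coordinate-wise Laplace noise produces an $L_{1}$ norm in the exponent of its density, and $S(f)$ is defined via the $L_{1}$ distance, so the reverse triangle inequality converts a worst-case change in $f$ directly into a worst-case change in the log-density. Had one chosen, say, spherical Gaussian noise instead, the exponent would involve $L_{2}$ and only the weaker $(\varepsilon,\delta)$-differential privacy would follow. Everything else is routine manipulation of densities and a measure-theoretic integration.
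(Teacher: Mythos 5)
Your proof is correct and complete: the paper itself gives no proof of this theorem, importing it directly from the cited reference \cite{dwork06tcc}, and your density-ratio argument (coordinate-wise Laplace density with $L_{1}$ exponent, reverse triangle inequality against $S(f)$, then integration over the output set $S$) is exactly the canonical proof of the Laplace mechanism found there. Your closing observation about why the $L_{1}$ sensitivity matches the Laplace density, and why Gaussian noise would only yield $(\varepsilon,\delta)$-differential privacy, is also accurate.
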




\begin{example}
\label{ex:1}
To illustrate these definitions, consider a mini smart metering application, where users $U_1$, $U_2$, and $U_3$ need to send the sum of their measurements in two consecutive slots. The measurements of $U_1$, $U_2$ and $U_3$ are $(X_{1}^1 = 300, X_{2}^1 = 300)$, $(X_{1}^2 = 100, X_{2}^2 = 400)$, and $(X_{1}^3 = 50, X_{2}^3 = 150)$, resp. The nodes want differential privacy for the released sums with at least a $\varepsilon=0.5$. Based on Theorem \ref{thm:histo}, they need to add $\mathcal{L}(\lambda = \max_i \sum_{t} X_t^i / 0.5 = 1200)$ noise to the released sum in \textbf{each} slot. This noise ensures $\varepsilon = \sum_t X_{t}^1 / \lambda = 0.5$ individual indistinguishability for $U_1$, $\varepsilon = 0.42$ for $U_2$, and $\varepsilon = 0.17$ for $U_3$. Hence, the global $\varepsilon=0.5$ bound is guaranteed to all. Another interpretation is that $U_1$ has $\varepsilon_1 = X_{1}^1 / \lambda =0.25$, $\varepsilon_2=X_{2}^1 / \lambda=0.25$ privacy in each individual slot, and $\varepsilon = \varepsilon_1 + \varepsilon_2=0.5$ considering all two slots following from the composition property of differential privacy.   
\end{example}

\subsection{Utility definition}
Let $f : \mathbb{D} \rightarrow \mathbb{R}$.
In order to measure the utility, we quantify the difference between $f(D)$ and its perturbed value (i.e., $\hat{f}(D) = f(D) +  \mathcal{L}(\lambda)$) which is the error introduced by LPA. A common scale-dependant error measure is the Mean Absolute Error (MAE), which is 
$\mathbb{E}| f(D) - \hat{f}(D)|$ in our case.
However, the error should be dependent on the non-perturbed value of $f(D)$; if
$f(D)$ is greater, the added noise becomes
small compared to $f(D)$ which intuitively results in better utility. Hence, we rather use a slightly
modified version of a scale-independent metric called Mean Absolute Percentage Error (MAPE), which shows the proportion of the error to the data, as follows.

\begin{definition}[Error function]
\label{def:utility}
Let $D_{t} \in \mathbb{D}$ denote a dataset in time-slot $t$.
Furthermore, let $\delta_{t} = \frac {| f(D_{t}) - \hat{f}(D_{t}) |}{f(D_{t})+1}$ (i.e., the value of the error in slot $t$). The error function is defined 
as $\mu(t) = \mathbb{E}(\delta_{t})$. The expectation is taken on the randomness of $\hat{f}(D_{t})$. 
The standard deviation of the error is $\sigma(t) = \sqrt{\mathit{Var}(\delta_{t})}$ in time $t$. 
\end{definition}


In the rest of this paper, the terms "utility" and "error" are used interchangeably.


\section{Objectives}

Our goal is to develop a practical scheme that should not introduce more privacy risks
for users than traditional metering systems while retaining the benefits of smart meters. More specifically, the scheme should be
\begin{itemize}
\item \emph{differentially private}:  Considering \textbf{DN adversary}, the scheme differential privately releases sanitized aggregates $\hat{\mathbf{X}}_{t}$ where the leaked information about users is measured by $\varepsilon$.
\item \emph{robust and easily configurable}: It tolerates (random) node failures.
\item \emph{efficient}: It has low overhead which includes low computation load on smart meters, and low communication overhead between the supplier and individual meters. It should use pubic key operations only for initial key establishment. Afterwards, all communication is protected using more efficient symmetric crypto-based techniques. 
\item \emph{distributed}: Besides a certificate authority, the protocol does not require any trusted third party such as a trusted aggregator as in \cite{bohli10icc}. The smart meters communicate directly with the supplier.
\item \emph{useful for the supplier}: The
sanitized and the original (non-sanitized) aggregate should be ``similar'' (i.e., the error should be as small as possible). For instance, the supplier should be able to perform efficient management of the resource using the sanitized data: to monitor the consumption
at the granularity of a maximum few hundred households, and to detect consumption peaks or abnormal consumption. 
\end{itemize}

\section{Overview of approaches}

Our task is to enable the supplier to calculate the sum of maximum $N$ measurements (i.e., $\sum_{i=1}^N X_{t}^{i} = \mathbf{X}_{t}$ in all $t$) coming from $N$ different
users while ensuring $\varepsilon$-differential privacy for each user. 
This is guaranteed if the supplier can only access $\mathbf{X}_{t} + \mathcal{L}(\lambda(t))$, where $\mathcal{L}(\lambda(t))$ \footnote{We will use the notation $\lambda$ instead of $\lambda(t)$ if the dependency on time is obvious in the context.} is the Laplace noise calibrated to $\varepsilon$ as it has been described in Section \ref{sec:perturbation}. There are (at least) 3 possible approaches to do this which are detailed as follows.

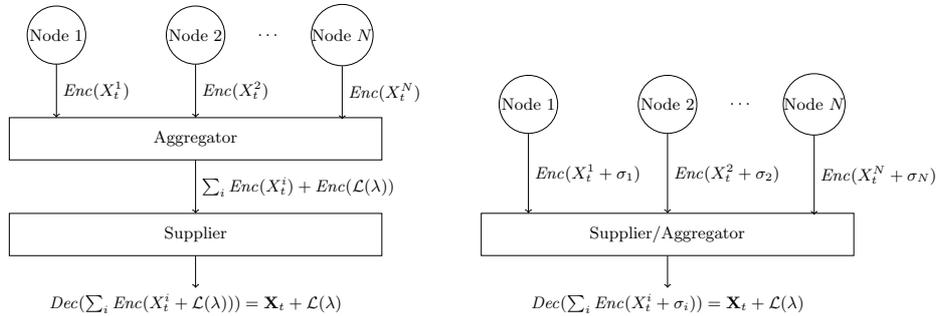
\begin{figure*}[ht]
\centering
\subfigure[Centralized approach: aggregation with trusted aggregator.]
{
\label{fig:centralized}
\scalebox{0.7} {
\begin{footnotesize}
\begin{tikzpicture}
[user/.style={circle,draw, inner sep=0pt,minimum size=1.1cm},
supp/.style={rectangle,draw, inner sep=0pt,minimum width=7cm, minimum height=0.8cm},auto]

\node[user] (n1) at (0,4) {Node 1};
\node[user] (n2) [right=1.5cm of n1.east] {Node 2};
\node [right=5mm of n2.east] (others) {$\ldots$};
\node[user] (nN) [right=5mm of others.east] {Node $N$};

\node (A1) [below=of n1] {};
\node[supp] (A2) [below=of n2] {Aggregator};
\node (A3) [below=of n2] {};
\node (A4) [below=of nN] {};

\node[supp] (S) [below=1cm of A2] {Supplier};
\node (text) [below=0.6 cm of S] {$\mathit{Dec}(\sum_{i}\mathit{Enc}(X_{t}^i + \mathcal{L}(\lambda))) = \mathbf{X}_{t} + \mathcal{L}(\lambda)$};

\path[->] (A2) edge node {$\sum_{i}\mathit{Enc}(X_{t}^{i}) + \mathit{Enc}(\mathcal{L}(\lambda))$} (S)
(n1) edge node {$\mathit{Enc}(X_{t}^{1})$} (A1)
(n2) edge node {$\mathit{Enc}(X_{t}^{2})$} (A2)
(nN) edge node {$\mathit{Enc}(X_{t}^{N})$} (A4)
(S) edge (text);
\end{tikzpicture}
\end{footnotesize}
}
}
\subfigure[Our approach: aggregation without trusted entity. If $\sigma_{i} = \mathcal{G}_1(N, \lambda) + \mathcal{G}_2(N, \lambda)$, where $\mathcal{G}_1$, $\mathcal{G}_2$ are i.i.d gamma noise, then $\sum_{i=1}^N\sigma_{i} = \mathcal{L}(\lambda)$.]
{
\label{fig:our_approach}
\scalebox{0.7} {
\begin{footnotesize}
\begin{tikzpicture}
[user/.style={circle,draw, inner sep=0pt,minimum size=1.1cm},
supp/.style={rectangle,draw, inner sep=0pt,minimum width=7cm, minimum height=0.8cm},auto]

\node[user] (n1) at (0,4) {Node 1};
\node[user] (n2) [right=1.5cm of n1.east] {Node 2};
\node [right=5mm of n2.east] (others) {$\ldots$};
\node[user] (nN) [right=5mm of others.east] {Node $N$};

\node (A1) [below=1.5 cm of n1] {};
\node[supp] (A2) [below=1.5 cm of n2] {Supplier/Aggregator};
\node (A3) [below=1.5 cm of n2] {};
\node (A4) [below=1.5 cm of nN] {};
\node (text) [below=0.6 cm of A2] {$\mathit{Dec}(\sum_{i}\mathit{Enc}(X_{t}^i + \sigma_{i})) = \mathbf{X}_{t} + \mathcal{L}(\lambda)$}; 



\path[->] (n1) edge node {$\mathit{Enc}(X_{t}^{1} + \sigma_{1})$} (A1)
(n2) edge node {$\mathit{Enc}(X_{t}^{2} + \sigma_{2})$} (A2)
(nN) edge node {$\mathit{Enc}(X_{t}^{N} + \sigma_{N})$} (A4)
(A2) edge  (text)
;
\end{tikzpicture}
\end{footnotesize}
}
}
\caption{Aggregating measurements while guaranteeing differential privacy.}
\end{figure*}

\subsection{Fully decentralized approach (without aggregator)}
\label{sec:trusted_appoach}
Our first attempt is that each user adds some noise to its own measurement, where the noise
is drawn from a Laplace distribution. In particular, each node $i$ sends the value of $X_{t}^{i} + \mathcal{L}(\lambda)$ directly to the supplier in time $t$.  
It is easy to see that $\varepsilon$ is guaranteed to all users, but in fact
the final noise added to the aggregate (i.e., $\sum_{i=1}^N\mathcal{L}(\lambda)$) is $N$ times larger than $\mathcal{L}(\lambda)$, and hence, the error is $\mu(t) = \frac{1}{\mathbf{X}_t + 1} \mathbb{E}|\sum_{i=1}^N \mathcal{L}(\lambda)| = \frac{N \cdot \lambda}{\mathbf{X}_t + 1}$. 

\subsection{Aggregation with a trusted aggregator}
Our second attempt can be to aggregate the measurements of some users, and send the perturbed
aggregate to the supplier. In particular, nodes are grouped into $N$ sized clusters and each node of a cluster sends its measurement $X_{t}^{i}$ to the (trusted) cluster aggregator, that is a trusted entity different from the supplier. 
The aggregator computes $\mathbf{X}_{t} = \sum_{i=1}^{N} X_{t}^{i}$  and obtains $\hat{\mathbf{X}}_{t} = \mathbf{X}_{t} + \mathcal{L}(\lambda)$ by adding noise to the aggregate. This perturbed aggregate is then sent to the supplier as it is illustrated in Figure \ref{fig:centralized}.

The utility of this approach is better than in the previous case, as the noise is only added to the sum and not to each measurement $X_{t}^{i}$. Formally, $\mu(t) = \frac{1}{\mathbf{X}_{t}+1}\mathbb{E}|\mathcal{L}(\lambda)| = \frac{\lambda}{\mathbf{X}_{t}+1}$. Similary, 
$\delta(t) = \frac{1}{\mathbf{X}_{t}+1} \cdot \sqrt{\mathbb{E}|\mathcal{L}(\lambda)|^2 - (\mathbb{E}|\mathcal{L}(\lambda)|)^2} = \frac{\lambda}{\mathbf{X}_{t}+1}$. 

However, the main drawback of this approach is that the aggregator must be 
fully trusted since it receives each individual measurement from the users.
This can make this scheme impractical if there is no such trusted entity.

\subsection{Our approach: aggregation without trusted entity}

Although the previous scheme is differential private, it works only if the aggregator is trustworthy and faithfully adds the noise to the measurement. In particular, the scheme will not be secure if the aggregator omits to add the noise. 

Our scheme, instead, does not rely on any centralized aggregator. The noise is added by each smart meter on their individual data and encrypted in such a way that the aggregator can only compute the (noisy) aggregate. Note that with our approach the aggregator and the supplier do need to be separate entities. The supplier can even play the role of the aggregator, as the encryption prevents it to access individual measurements, and the distributed generation of the noise ensures that it cannot manipulate the noise.

Our proposal is composed of 2 main steps: distributed generation of the Laplacian noise and encryption of individual measurements. These 2 steps are described in the remainder of this section.

\subsubsection{Distributed noise generation: a new approach}
In our proposal, the Laplacian noise is generated in a fully distributed way as is illustrated in Figure \ref{fig:our_approach}.  We use
the following lemma that states that the Laplace distribution is divisible and be constructed as the sum of i.i.d. gamma distributions. As this divisibility is infinite, it works for arbitrary number of users.   

\begin{lemma}[Divisibility of Laplace distribution \cite{kotz01book}]
\label{lem:gamma}
Let $\mathcal{L}(\lambda)$ denote a random variable which has a Laplace distribution with PDF $f(x,\lambda) = \frac{1}{2\lambda} e^{\frac{|x|}{\lambda}}$. Then the distribution of $\mathcal{L}(\lambda)$ is infinitely divisible. Furthermore, for every integer $n \geq 1$, $\mathcal{L}(\lambda) = \sum_{i=1}^{n} [\mathcal{G}_{1}(n, \lambda) - \mathcal{G}_{2}(n, \lambda)]$, where $\mathcal{G}_{1}(n, \lambda)$ and $\mathcal{G}_{2}(n, \lambda)$ are i.i.d. random variables having gamma distribution with PDF $g(x,n,\lambda) = \frac{(1/\lambda)^{1/n}}{\Gamma(1/n)} x^{\frac{1}{n}-1}e^{-x/\lambda}$ where $x \geq 0$.  
\end{lemma}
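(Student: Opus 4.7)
The natural approach is via characteristic functions, since the statement is about identity in distribution of a sum of independent random variables, and characteristic functions turn independent sums into products. The plan is to compute the characteristic function of $\mathcal{L}(\lambda)$, compute the characteristic function of the difference $\mathcal{G}_1(n,\lambda)-\mathcal{G}_2(n,\lambda)$, raise the latter to the $n$-th power, and verify that the two expressions coincide. Uniqueness of characteristic functions then yields the claimed distributional identity, and infinite divisibility follows because the construction works for every integer $n\geq 1$.

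First I would compute $\phi_{\mathcal{L}}(t) = \mathbb{E}[e^{it\mathcal{L}(\lambda)}]$ directly from the density $\frac{1}{2\lambda}e^{-|x|/\lambda}$ by splitting the integral at $0$; the two exponential integrals combine into the standard closed form
\[
\phi_{\mathcal{L}}(t) = \frac{1}{1+\lambda^{2}t^{2}}.
\]
This is the target expression that the gamma construction must reproduce.

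Next I would compute the characteristic function of a single $\mathcal{G}(n,\lambda)$. Recognizing the density $g(x,n,\lambda)$ as a gamma with shape parameter $1/n$ and scale parameter $\lambda$, a direct evaluation of $\int_{0}^{\infty} e^{itx} g(x,n,\lambda)\,dx$ (or invocation of the well-known gamma characteristic function) gives
\[
\phi_{\mathcal{G}}(t) = (1 - i\lambda t)^{-1/n}.
\]
Since $\mathcal{G}_{1}$ and $\mathcal{G}_{2}$ are i.i.d., the characteristic function of their difference is $\phi_{\mathcal{G}}(t)\cdot\phi_{\mathcal{G}}(-t) = (1-i\lambda t)^{-1/n}(1+i\lambda t)^{-1/n} = (1+\lambda^{2}t^{2})^{-1/n}$. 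Independence of the $n$ terms in the sum $\sum_{i=1}^{n}[\mathcal{G}_{1}(n,\lambda) - \mathcal{G}_{2}(n,\lambda)]$ then multiplies these characteristic functions, yielding $(1+\lambda^{2}t^{2})^{-1}$, which matches $\phi_{\mathcal{L}}(t)$ exactly.

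The main obstacle, such as it is, is bookkeeping around the fractional shape parameter $1/n$: one must be careful that $(1-i\lambda t)^{-1/n}$ is interpreted via the principal branch of the complex power so that the product $(1-i\lambda t)^{-1/n}(1+i\lambda t)^{-1/n}$ legitimately equals $(1+\lambda^{2}t^{2})^{-1/n}$ for all real $t$. Once this is handled (it follows because $1\pm i\lambda t$ have positive real part, so the principal branch is well-defined and the usual multiplicative identity applies), the uniqueness theorem for characteristic functions concludes the proof, and infinite divisibility is automatic since $n$ was arbitrary.
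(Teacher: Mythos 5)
Your proof is correct. It differs in route from the paper's argument, which is shorter and works at the level of distributional representations rather than characteristic functions: the paper observes that a Laplace variable with scale $\lambda$ is the difference of two i.i.d.\ exponential variables with rate $1/\lambda$, and then invokes the summation property of the gamma family — the sum of $n$ i.i.d.\ gamma variables with shape $1/n$ and scale $\lambda$ is a gamma with shape $1$, i.e.\ an exponential — so the sum $\sum_{i=1}^{n}[\mathcal{G}_1(n,\lambda)-\mathcal{G}_2(n,\lambda)]$ collapses to a difference of two i.i.d.\ exponentials and is therefore Laplace. Your argument instead verifies the identity directly: you compute $\phi_{\mathcal{L}}(t)=(1+\lambda^2t^2)^{-1}$, show that each independent difference has characteristic function $(1+\lambda^2t^2)^{-1/n}$ (with the appropriate care about the principal branch, which is legitimate since $1\pm i\lambda t$ have positive real part), raise to the $n$-th power, and conclude by uniqueness of characteristic functions. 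What each buys: the paper's route is quicker but leans on two standard facts as black boxes (the exponential-difference representation of the Laplace law and gamma summation), whereas yours is self-contained modulo the uniqueness theorem, and it also makes the claimed infinite divisibility immediate since $n$ is arbitrary. Your computation is moreover consistent with the paper's own later use of the characteristic function $(1+\lambda^2t^2)^{-1/n}$ for $\mathcal{G}_1-\mathcal{G}_2$ in the proof of the utility theorem, so nothing in your approach conflicts with the rest of the paper.
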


The lemma comes from the fact that $\mathcal{L}(\lambda)$ can be represented as the difference of two i.i.d exponential random variables with rate parameter $1/\lambda$. Moreover, $\sum_{i=1}^{n} \mathcal{G}_{1}(n, \lambda) - \sum_{i=1}^{n} \mathcal{G}_{2}(n, \lambda) = \mathcal{G}_{1}(1/\sum_{i=1}^{n}\frac{1}{n}, \lambda) - \mathcal{G}_{2}(1/\sum_{i=1}^{n}\frac{1}{n}, \lambda)= \mathcal{G}_{1}(1, \lambda) - \mathcal{G}_{2}(1, \lambda)$ due to the summation property of the gamma distribution\footnote{The sum of i.i.d. gamma random variables follows gamma distribution (i.e., $\sum_{i=1}^{n} \mathcal{G}(k_{i}, \lambda) = \mathcal{G}(1/\sum_{i=1}^{n}\frac{1}{k_{i}}, \lambda)$).}. Here, $\mathcal{G}_{1}(1, \lambda)$ and $\mathcal{G}_{2}(1, \lambda)$ are i.i.d exponential random variable with rate parameter $1/\lambda$ which completes the argument. 

Our distributed sanitization algorithm is simple; user $i$ calculates value $\hat{X}_{t}^{i} = X_{t}^{i} + \mathcal{G}_{1}(N, \lambda) - \mathcal{G}_{2}(N, \lambda)$ in slot $t$ and sends it to the aggregator, where $\mathcal{G}_{1}(N, \lambda)$ and $\mathcal{G}_{2}(N, \lambda)$ denote two random values independently drawn from the same gamma distribution. Now, if the aggregator sums up all values received
from the $N$ users of a cluster, then $\sum_{i=1}^{N} \hat{X}_{t}^{i} = \sum_{i=1}^{N} X_{t}^{i} + \sum_{i=1}^{N} [\mathcal{G}_{1}(N, \lambda) - \mathcal{G}_{2}(N, \lambda)] = \mathbf{X}_{t} + \mathcal{L}(\lambda)$ based on Lemma \ref{lem:gamma}.  

The utility of our distributed scheme is defined as $\mu(t) = \frac{1}{\mathbf{X}_{t}+1}\mathbb{E}|\mathbf{X}_{t} - \mathbf{X}_{t} + \sum_{i=1}^{n} [\mathcal{G}_{1}(N, \lambda) - \mathcal{G}_{2}(N, \lambda)]| = \frac{\mathbb{E}|\mathcal{L}(\lambda)|}{\mathbf{X}_{t}+1} =  \frac{\lambda}{\mathbf{X}_{t}+1}$, and $\delta(t) = \frac{\lambda}{\mathbf{X}_{t}+1}$.

\subsubsection{Encryption} 
\label{sec:enc}

The previous step is not enough to guarantee
privacy as only the sum of the measurements (i.e., $\hat{\mathbf{X}}_{t}$) is differential private
but not the individual measurements. In particular, the aggregator has access to $\hat{X}_{t}^{i}$, 
and even if $\hat{X}_{t}^{i}$ is noisy, $\mathcal{G}_{1}(N, \lambda) - \mathcal{G}_{2}(N, \lambda)$ is
usually insufficient to provide reasonable privacy for individual users if $N \gg 1$. This is illustrated in Figure \ref{fig:ill_noise}, where an individual's noisy and original measurements slightly differ.

\begin{figure*}[ht]
\centering
\subfigure[$X_{t}^{i}$]
{
\includegraphics[scale=0.45]{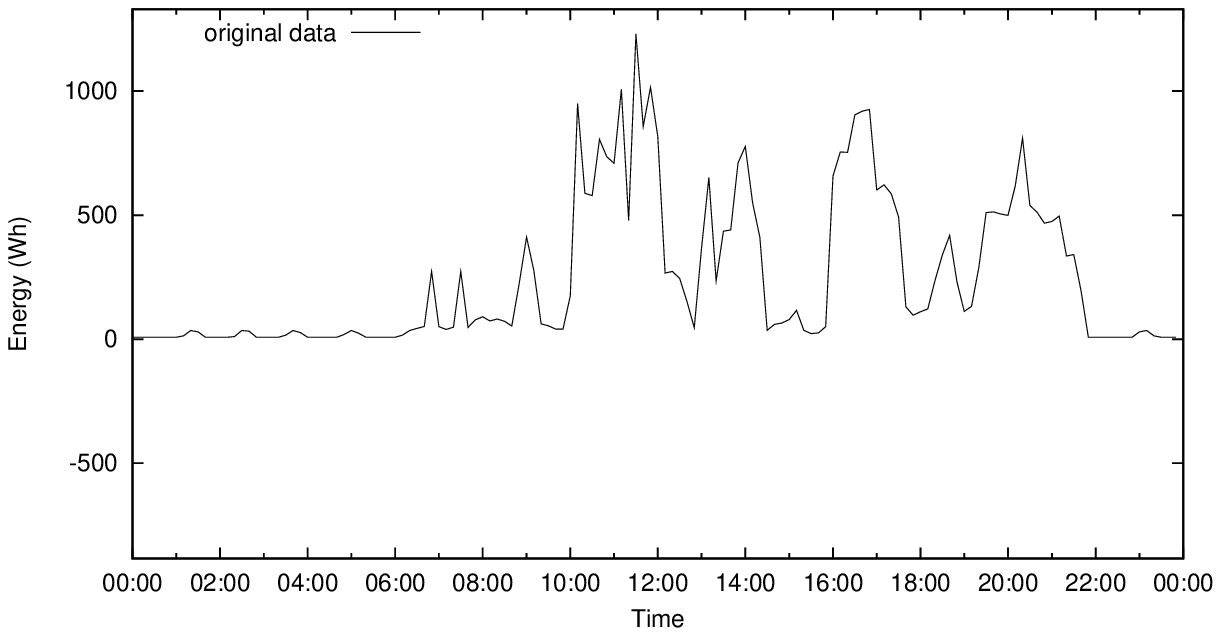}
}
\subfigure[$X_{t}^{i} + \mathcal{G}_{1}(N, \lambda) - \mathcal{G}_{2}(N, \lambda)$]
{
\includegraphics[scale=0.45]{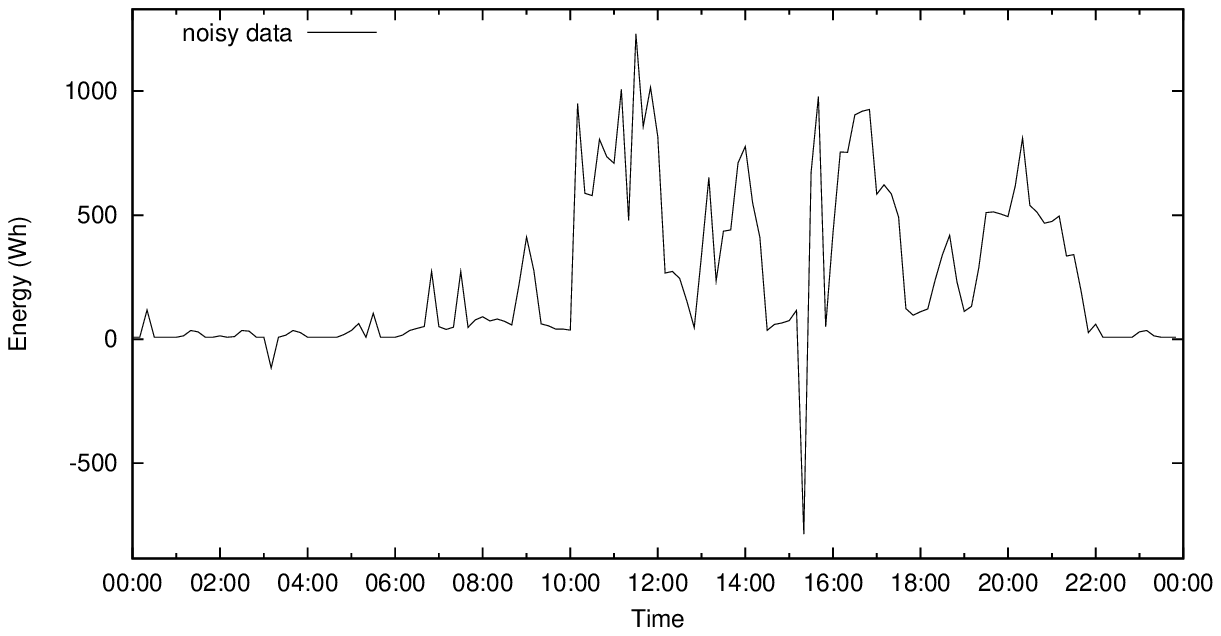}
}
\caption{\label{fig:ill_noise} The original and noisy measurements of user $i$, where the added noise is $\mathcal{G}_{1}(N, \lambda) - \mathcal{G}_{2}(N, \lambda)$ ($N=100$, $T_{p}$ is 10 min). }
\end{figure*}

To address this problem, each contribution
is encrypted using a modulo addition-based encryption scheme, inspired by \cite{cc05mobiquitous}, 
such that the aggregator
can only decrypt the sum of the individual values, and cannot access any of them. 
In particular, let $k_i$ denote a random key generated by user $i$ inside a cluster such that $\sum_{i=1}^{N} k_i = 0$, and $k_i$ is not known to the aggregator. Furthermore, $Enc()$ denotes a probabilistic encryption scheme such that 
$Enc(p,k,m) = p + k \mod m$, where $p$ is the plaintext, $k$ is the encryption key, and $m$ is a large integer.
The adversary cannot decrypt any $\mathit{Enc}(\hat{X}_t^{i},k_i,m)$, since it does not know $k_i$, but it can easily retrieve the noisy sum by adding the encrypted noisy measurements of all users; $\sum_{i=1}^{N} Enc(\hat{X}_t^{i},k_i,m) = \sum_{i=1}^N \hat{X}_t^i + \sum_{i=1}^N k_i = \sum_{i=1}^N \hat{X}_t^i \mod m$. If $z=\max_{i,t}(\hat{X}_t^i)$ then $m$ should be selected as $m=2^{\lceil \log_2(z \cdot N) \rceil}$ \cite{cc05mobiquitous}. The generation of $k_i$ is described in Section \ref{sec:processing}.

\section{Protocol description}

\subsection{System setup}
\label{sec:setup}

In our scheme, nodes are grouped into clusters of size $N$, where $N$ is a
parameter. The protocol requires the establishment of pairwise keys between each
pair of nodes inside a cluster that can be done by using traditional Diffie-Hellman key 
exchange as follows. When a node $v_{i}$ is installed, it provides a self-signed DH component and its certificate to the supplier. 
Once all the nodes of a cluster are installed, or a new node is deployed, the supplier broadcasts the certificates and public DH components of all nodes. Finally, each node $v_i$ of the cluster can compute a pairwise key $K_{i,j}$ shared with any other node $v_j$ in the networks. Note that no communication is required between $v_i$ and $v_j$.

\subsection{Smart meter processing}
\label{sec:processing}
Each node $v_i$ sends at time $t$ its periodic measurement, $X^i_t$, to the supplier as follows:

\begin{description}
\item[\textbf{Phase 1 (Data sanitization):}]  
Node $v_i$ calculates value 
$\hat{X}_{t}^{i} = X_{t}^{i} + \mathcal{G}_{1}(N, \lambda) - \mathcal{G}_{2}(N, \lambda)$, 
where $\mathcal{G}_{1}(N, \lambda)$ and $\mathcal{G}_{2}(N, \lambda)$ denote two random 
values independently drawn from the same gamma distribution and $N$ is the cluster size.

\item[\textbf{Phase 2 (Data encryption):}]  
Each noisy data $\hat{X}_{t}^{i}$ is then encrypted into  $\mathit{Enc}(\hat{X}_{t}^{i})$ using the modulo addition-based encryption scheme detailed in Section \ref{sec:enc}. The following extension is then applied to generate the encryption keys:
Each node, $v_i$, selects $\ell$ other nodes randomly, such that if $v_i$ selects $v_j$, then $v_j$ also selects $v_i$. 
Afterwards, both nodes generate a common dummy key $k$ from their pairwise key $K_{i,j}$; $v_i$ adds $k$ 
to $\mathit{Enc}(\hat{X}_{t}^{i})$ and $v_j$ adds $-k$ to $\mathit{Enc}(\hat{X}_{t}^{j})$. 
As a result, the aggregator cannot decrypt the individual ciphertexts (it does not know the dummy key $k$). However, it adds all
the ciphertexts of a given cluster, the dummy keys cancel out and it retrieves the encrypted sum of the (noisy) contributions.
The more formal description is as follows:

\begin{enumerate} 
\item node $v_i$ selects some nodes of the cluster randomly (we call them participating nodes) using a secure pseudo random function (PRF) such that if $v_i$ selects $v_j$, then $v_j$ also selects $v_i$. 
In particular, $v_i$ selects $v_j$ if mapping $\mathit{PRF}(K_{i,j}, r_{1})$ to a value between 0 and 1 is less or equal than $\frac{w}{N-1}$, where $r_{1}$ is a public value changing in each slot.
We denote by $\ell$ the number of selected participating nodes, and $\mathsf{ind}_i[j]$ (for $j=1, \ldots,\ell$)  denotes the index of the $\ell$ nodes selected by node $v_i$. Note that, for the supplier, the probability that $v_i$ selects $v_j$ is $\frac{w}{N-1}$ as it does not know $K_{i,j}$. The expected value of $\ell$ is $w$.

\item $v_i$ computes for each of its $\ell$ participating nodes a {\em dummy key}. A dummy key between $v_i$ and $v_j$ is defined as $\mathsf{dkey}_{i,j}= (i-j)/|i-j|\cdot\mathit{PRF}(K_{i,j}, r_{2})$, where $K_{i,j}$ is 
the key shared by $v_i$ and $v_j$, and $r_{2} \neq r_{1}$ is public value changing in each slot.  Note that $\mathsf{dkey}_{i,j}= -\mathsf{dkey}_{j,i}$.

\item $v_i$ then computes 
$\mathit{Enc}(\hat{X}_{t}^{i}) = \hat{X}_{t}^{i}  + K'_i + \sum^\ell_{j=1} \mathsf{dkey}_{i,\mathsf{ind}_i[j]} \pmod{m}$, where $K'_i$ is the keystream shared by $v_i$ and the aggregator which can be established using the DH protocol as above, and $m$ is a large integer (see \cite{cc05mobiquitous}). Note that $m$ must be larger than the sum of all contributions (i.e., final aggregate) plus the Laplacian noise.\footnote{Note that the noise is a random value from an infinite domain and this sum might be larger than $m$. However, choosing sufficiently large $m$, the probability that the sum exceeds $m$ can be made arbitrary small due to the exponential tail of the Laplace distribution.}  

Note that $\hat{X}_{t}^{i}$ is encrypted multiple times: it is first encrypted with the keystream $K'_i$
and then with several dummy keys. $K'_i$ is needed to ensure confidentiality between a user and the aggregator. 
The dummy keys are needed to prevent the aggregator (supplier) from retrieving $\hat{X}_{t}^{i}$.
\item $\mathit{Enc}(\hat{X}_{t}^{i})$ is sent to the aggregator (supplier).
\end{enumerate}
\end{description}

\subsection{Supplier processing}

\begin{description}
\item[Phase 1 (Data aggregation):] At each slot, the supplier aggregates the $N$ measurements received from the cluster smart meters by summing them, and obtains $\sum^N_{i=1} \mathit{Enc}(X_{t}^{i})$. In particular,
$$\mathit{Enc}(\hat{\mathbf{X}}_{t}) = \sum^N_{i=1} (\hat{X}_{t}^{i} + K'_i)  + \sum^N_{i=1}\sum^\ell_{j=1} \mathsf{dkey}_{i,\mathsf{ind}_i[j]} \pmod{m}$$
where $\sum^N_{i=1}\sum^\ell_{j=1} \mathsf{dkey}_{i,\mathsf{ind}_i[j]}=0$ because $\mathsf{dkey}_{i,j}=-\mathsf{dkey}_{j,i}$. Hence, 
$$
\mathit{Enc}(\hat{\mathbf{X}}_{t}) = \sum^N_{i=1} (\hat{X}_{t}^{i} + K'_i) = \sum^N_{i=1} \mathit{Enc}(\hat{X}_{t}^{i})
$$

\item[Phase 2 (Data decryption):] The aggregator then decrypts the aggregated value by subtracting the sum of the node's keystream, and retrieves the sum of the noisy measures:
$$
\sum^N_{i=1} \mathit{Enc}(\hat{X}_{t}^{i}) - \sum^N_{i=1} K'_i = \sum^N_{i=1} \hat{X}_{t}^{i}  \pmod{m}
$$
where $\sum^N_{i=1} \hat{X}_{t}^{i} = \sum^N_{i=1} X_{t}^{i} + \sum^N_{i=1} \mathcal{G}_{1}(N, \lambda) - \sum^N_{i=1} \mathcal{G}_{2}(N, \lambda) = \sum^N_{i=1} X_{t}^{i} + \mathcal{L}(\lambda)$
based on Lemma \ref{lem:gamma}.
\end{description}
The main idea of the scheme is that the aggregator is not able to decrypt the
individual encrypted values because it does not know the dummy keys.
However, by adding the different encrypted contributions, dummy keys
cancel each other and the aggregator can retrieve the sum of the plaintext. 
The resulting plaintext is then the perturbed sums of the measurements, where the
noise ensures the differential privacy of each user.

\paragraph{Complexity:} Let $b$ denote the size of the pairwise keys (i.e., $K_{i,j}$). Our scheme has $O(N \cdot b)$ storage complexity, as each node needs to store $\ell \leq N$ pairwise keys. The computational overhead is dominated by the encryption and the key generation complexity. The encryption is composed of $\ell \leq N$ modular addition of $\log_2 m$ bits long integers, while the key generation needs the same number of PRF executions. This results in a complexity of $O(N \cdot (\log_2 m + c(b)))$, where $c(b)$ is the complexity of the applied PRF function. \footnote{For instance, if $\log_2 m=32$ bits (which should be sufficient in our application), $b=128$, and $N=1000$, a node needs to store 16 Kb of key data and perform  maximum 1000 additions along with 1000 subtractions (for modular reduction) on 32 bits long integers, and maximum 1000 PRF executions. This overhead should be negligible even on constrained embedded devices.}

\section{Adding robustness}
\label{sec:robust}

We have assumed so far that all the $N$ nodes of a cluster
participated in the protocol. However, it might happen that, for
several different reasons (e.g., node or communication failures) some
nodes are not able to participate in each epoch. This would have two effects: first, 
security will be reduced since the sum of the noise added by each node will not be 
equivalent to $\mathcal{L}(\lambda)$. Hence, differential privacy may not be guaranteed.
Second, the aggregator will not be able to decrypt the aggregated value since the sum of 
the dummy keys will not cancel out.

In this section, we extend our scheme to resist node failures. We propose
a scheme which resists the failure of up to $M$ out of $N$ nodes, 
where $M$ is a configuration parameter. We will study later the
impact of the value $M$ on the scheme performance.

\subsection{Sanitization phase extension}

In order to resist the failure of $M$ nodes, each node should add the following
noise to their individual measurement: $\mathcal{G}_{1}(N-M, \lambda) - \mathcal{G}_{2}(N-M, \lambda)$. Note that
$\sum_{i=1}^{N-M} [\mathcal{G}_{1}(N-M, \lambda) - \mathcal{G}_{2}(N-M, \lambda)] = \mathcal{L}(\lambda)$. Therefore, this sanitization algorithm remains differential private, if at least $N-M$ nodes participate in the protocol. Note that in that case each node adds extra noise to the aggregate in order to ensure differential privacy even if fewer than $M$ nodes fail to send their noise share to the aggregator. 

\subsection{Encryption phase extension}
\subsubsection{A simple approach}

As described previously, all the dummy keys cancel out at the aggregator.
However, this is not the case if not all the nodes participate in the protocol. In order to resist the failure of nodes, one can extend the encryption scheme with an additional round where the 
aggregator asks the participating nodes of non-responding nodes for the missing dummy keys:
\begin{enumerate}
\item Once the aggregator received all contributions, it broadcasts the ids of the non-responding nodes. 
\item Upon the reception of this message, each node $v_{i}$ verifies whether any of the ids in the broadcast message are in its participating node list (i.e., it can be found in $\mathsf{ind}_{i}$). For each of such id, the node sends the corresponding dummy key to the aggregator.
\item The aggregator then subtracts all received dummy keys from $\mathit{Enc}(\hat{\mathbf{X}}_{t})$ and retrieves  $\sum^N_{i=1} (\hat{X}_{t}^{i} + K'_i)$ which can be decrypted.
\end{enumerate}

This approach has a severe problem: if the aggregator is untrusted, it can easily retrieve the measurement of a $v_{i}$: broadcasting its id in Step 2, the participating nodes of $v_i$ reply with the dummy keys of $v_i$ which can be removed from $\mathit{Enc}(\hat{X}_{t}^{i})$. 

\subsubsection{Our proposal} 
\label{sec:advanced} 

In this approach, each node adds a secret random value to its encrypted value before releasing it in the first round. This is needed to prevent the adversary to recover the noisy measurement through combining different messages of the nodes. Then, in the second round when the aggregator asks for the missing dummy keys, every node reveals its random keys along with the missing dummy keys that it knows:
\begin{enumerate}
\item Each node $v_{i}$ sends 
$
\mathit{Enc}(\hat{X}_{t}^{i}) = \hat{X}_{t}^{i} + K'_{i} + \sum^\ell_{j=1} \mathsf{dkey}_{i,\mathsf{ind}_i[j]} + C_{i} \pmod{m}
$
where $C_{i}$ is the secret random key of $v_{i}$ generated randomly in each round.

\item After receiving all measurements, the aggregator asks all nodes for their random keys and the missing dummy keys through broadcasting the id of the non-responding nodes. 

\item Each node $v_{i}$ verifies whether any ids in this broadcast message are in its participating node list, where the set of the corresponding participating nodes is denoted by $S$. Then, $v_i$ replies with $\sum_{j \in S} \mathsf{dkey}_{i,\mathsf{ind}_i[j]} + C_{i} \pmod{m}$. 

\item The aggregator subtracts all received values from $\sum_{i=1}^N \mathit{Enc}(\hat{X}_{t}^i)$ which results in $\sum^N_{i=1} (\hat{X}_{t}^{i} + K'_i)$, as the random keys as well as the dummy keys cancel out. 
\end{enumerate} 

Note that as the supplier does not know the random keys, it cannot remove them from any messages but only from the final aggregate; adding each node's response to the aggregate all the dummy keys and secret random keys cancel out and the supplier obtains $\hat{\mathbf{X}}_{t}$. 
Although
the supplier can still recover $\hat{X}_t^i$ if it knows $v_i$'s participating nodes (the supplier simply asks for all the dummy keys of $v_i$ in Step 2 and subtracts $v_i$'s response in Step 4 from $\mathit{Enc}(\hat{X}_t^i)$), we will show later that this probability can be made practically small by adjusting $w$ and $N$ correctly.   

Note that the protocol fails if, for some reasons, a node does not send its random key to the aggregator (as only the node itself knows its random key, it cannot be reconstructed by other parties). However, it is very unlikely that a node between the two rounds fails, and an underlying reliable transport protocol helps to overcome communication errors. 

Finally, also note that this random key approach always requires two rounds of communication (even if the aggregator receives all encrypted values correctly in the first round), as the random keys are needed to be removed from $\mathit{Enc}(\hat{\mathbf{X}}_{t})$ in the second round.

\subsection{Utility evaluation}
\label{sec:extra_noise}

If all $N$ nodes participate in the protocol, the added noise will
be larger than $\mathcal{L}(\lambda)$ which is needed to ensure differential privacy. 
In particular, $\sum_{i=1}^{N} [\mathcal{G}_{1}(N-M, \lambda) - \mathcal{G}_{2}(N-M, \lambda)] = 
\mathcal{L}(\lambda) + \sum_{i=1}^{M} [\mathcal{G}_{1}(N-M, \lambda) - \mathcal{G}_{2}(N-M, \lambda)]$, where
the last summand is the extra noise needed to tolerate the failure of maximum $M$ nodes. Clearly, this extra noise increases the error if all $N$ nodes operate correctly and add their noise shares faithfully. In what follows, we calculate the error and its standard deviation if we add this extra noise to the aggregate.

\begin{theorem}[Utility]
\label{thm:utility}
Let $\alpha = M/N$ and $\alpha < 1$. Then,
$$
\mu(t) \leq \frac{2}{B(1/2,\frac{1}{1-\alpha})} \cdot \frac{\lambda(t)}{\mathbf{X}_{t}+1}
$$ 
and 
$$ 
\sigma(t) \leq \sqrt{\left( \frac{2}{1- \alpha} - \frac{4}{B(1/2,\frac{1}{1-\alpha})^2} \right)} \cdot \frac{\lambda(t)}{\mathbf{X}_{t}+1}
$$
where $B(x,y) = \frac{\Gamma(x)\Gamma(y)}{\Gamma(x+y)}$ is the beta function.
\end{theorem}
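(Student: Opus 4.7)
The plan is to identify the distribution of the total noise added to the aggregate, then compute its mean absolute value and variance in closed form, and finally divide by $\mathbf{X}_t+1$ as in Definition~\ref{def:utility}.

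First I would show that the total noise $Y = \sum_{i=1}^{N}[\mathcal{G}_1(N-M,\lambda) - \mathcal{G}_2(N-M,\lambda)]$ can be written as $Y = G_1 - G_2$ with $G_1, G_2$ i.i.d.\ gamma variables having shape $k = 1/(1-\alpha)$ and scale $\lambda$. This follows immediately from the summation property of the gamma distribution (the footnote after Lemma~\ref{lem:gamma}), since summing $N$ i.i.d.\ copies of a shape-$1/(N-M)$ gamma gives a shape-$N/(N-M)=1/(1-\alpha)$ gamma. So by Definition~\ref{def:utility},
$$\mu(t) = \frac{\mathbb{E}|G_1-G_2|}{\mathbf{X}_t+1}, \qquad \sigma(t) = \frac{\sqrt{\mathrm{Var}|G_1-G_2|}}{\mathbf{X}_t+1}.$$

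The main obstacle is the computation of $\mathbb{E}|G_1 - G_2|$ in closed form. I would use the classical decomposition: if $G_1, G_2$ are i.i.d.\ gamma$(k,\lambda)$, then $S = G_1+G_2 \sim \mathrm{gamma}(2k,\lambda)$ and $B = G_1/S \sim \mathrm{Beta}(k,k)$ are independent, so $|G_1-G_2| = S\,|2B-1|$ and
$$\mathbb{E}|G_1-G_2| = \mathbb{E}[S]\cdot\mathbb{E}|2B-1| = 2k\lambda\cdot \mathbb{E}|2B-1|.$$
Next, by the symmetry of the Beta$(k,k)$ density about $1/2$ and the substitution $u=2b-1$, I would reduce $\mathbb{E}|2B-1|$ to $\tfrac{1}{B(k,k)\,2^{2k-2}}\int_0^1 u(1-u^2)^{k-1}\,du$, and the inner integral evaluates (via $v=u^2$) to $1/(2k)$. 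Combining, $\mathbb{E}|G_1-G_2| = \lambda/(2^{2k-2}B(k,k)) = \lambda\Gamma(2k)/(2^{2k-2}\Gamma(k)^2)$. Applying the Legendre duplication formula $\Gamma(2k) = 2^{2k-1}\Gamma(k)\Gamma(k+1/2)/\sqrt{\pi}$ collapses this to
$$\mathbb{E}|G_1-G_2| = \frac{2\lambda\,\Gamma(k+1/2)}{\sqrt{\pi}\,\Gamma(k)} = \frac{2\lambda}{B(1/2,k)},$$
which yields the claimed bound on $\mu(t)$ (in fact with equality; the stated inequality is loose only because of the $+1$ in the denominator of $\delta_t$).

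For the variance part I would compute $\mathbb{E}|Y|^2$ directly as $\mathbb{E}(G_1-G_2)^2 = 2\,\mathrm{Var}(G_1) = 2k\lambda^2 = 2\lambda^2/(1-\alpha)$, using independence and the fact that $\mathrm{Var}(\mathrm{gamma}(k,\lambda)) = k\lambda^2$. Then
$$\mathrm{Var}|Y| = \mathbb{E}|Y|^2 - (\mathbb{E}|Y|)^2 = \lambda^2\!\left(\frac{2}{1-\alpha} - \frac{4}{B(1/2,1/(1-\alpha))^2}\right),$$
and dividing by $(\mathbf{X}_t+1)^2$ and taking square roots gives exactly the stated bound on $\sigma(t)$. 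The only subtle point is that $\mathrm{Var}|Y| \ge 0$, which must be the case since $(\mathbb{E}|Y|)^2 \le \mathbb{E}|Y|^2$ by Jensen, providing a useful sanity check on the algebra.
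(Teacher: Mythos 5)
Your proposal is correct, and the reduction steps match the paper: like the paper's proof, you first use the gamma summation property to collapse the total noise to $\mathcal{G}_1\bigl(1-\alpha,\lambda\bigr)-\mathcal{G}_2\bigl(1-\alpha,\lambda\bigr)$ (shape $1/(1-\alpha)$), and your variance step ($\mathbb{E}|Y|^2=\mathbb{E}(Y^2)=2\,\mathrm{Var}(G_1)$, then subtract $(\mathbb{E}|Y|)^2$) is essentially identical to the paper's Lemma on $\mathrm{Var}|\mathcal{G}_1-\mathcal{G}_2|$. Where you genuinely diverge is the key closed-form computation of $\mathbb{E}|G_1-G_2|$: the paper works through the characteristic function, identifies the difference as a Generalized Asymmetric Laplace variable, writes its density in terms of the modified Bessel function $K_{\tau-1/2}$, and invokes a Bessel integral identity (its Lemma on the integral property of the Bessel function); you instead use the classical beta--gamma independence $S=G_1+G_2\sim\mathrm{gamma}(2k,\lambda)$, $B=G_1/S\sim\mathrm{Beta}(k,k)$, factor $\mathbb{E}|G_1-G_2|=\mathbb{E}[S]\,\mathbb{E}|2B-1|$, evaluate an elementary integral, and finish with the Legendre duplication formula to land on $2\lambda/B(1/2,k)$. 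Both routes give the same exact value (so, as you note, the theorem's $\mu(t)$ bound actually holds with equality under Definition~\ref{def:utility}; your parenthetical about the $+1$ causing slack is the only slightly off remark, since the $+1$ is built into $\delta_t$ itself); your route avoids special-function machinery (no GAL density, no Bessel integrals) at the price of invoking the beta--gamma independence and duplication identities, while the paper's route is more self-contained in that it states the one Bessel lemma it needs and generalizes naturally to other absolute moments of the gamma difference.
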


The derivation can be found in Appendix \ref{app:proof}.  
Based on Theorem \ref{thm:utility}, 
$
\sigma(t) =  \mu(t) \cdot \left(\frac{2}{B(1/2,\frac{1}{1-\alpha})}\right)^{-1} \cdot \sqrt{\left( \frac{2}{1- \alpha} - \frac{4}{B(1/2,\frac{1}{1-\alpha})^2} \right)}
$. It is easy to check that $\sigma(t)$ is always less or equal than $\mu(t)$. In particular, if $\alpha=0$ (there are no malicious nodes and node failures), then $\sigma(t) = \mu(t)$. If $\alpha > 0$ then $\sigma(t) < \mu(t)$ but $\sigma(t) \approx \mu(t)$. 

%


\section{Security Analysis}

\subsection{Deploying malicious nodes}

In the proposed scheme, each measurement is perturbed and encrypted. 
Therefore, a honest-but-curious attacker cannot
gain any information (up to $\varepsilon$) about individual measurements in any slot. This is guaranteed by the encryption scheme and the added noise. 

However, a DN adversary (see Section \ref{sec:adv_model}), which deploys $T$ malicious nodes, may be able to:
\begin{itemize}
\item reduce the noise level by limiting (or omitting) the gamma noise added by
malicious nodes. As a result, the sum of the noise shares will not 
equal to the Laplacian noise which can decrease the privacy of users.
However, recall that, due to the robustness property of our scheme detailed in Section \ref{sec:robust}, we
add extra noise to tolerate $M$ node failures. Adding extra noise calibrated to $M+T$ is sufficient to tolerate this type of attack.

\item decrypt $\mathit{Enc}(\hat{X}_{t}^{i})$ of a node $v_i$ and retrieve the perturbed data. 
As individual data is only weakly noised, the attacker might infer 
some information from them, and therefore, compromise privacy. 
However, the encryption scheme that we used is provably 
secure  \cite{cc05mobiquitous}, and nodes are assumed to be tamper-resistant. 
Thus, the only way to break privacy is to retrieve the dummy keys of $v_i$.
Because the participating nodes are selected randomly for each message, 
this can only be achieved if {\bf all} participating nodes of $v_i$ are
malicious {\bf and} the supplier is also malicious (i.e., the adversary knows $K'_{i}$). This happens if $v_i$ does not select any honest participating node that has a probability of $(1-\frac{w}{N-1})^{N-T-1}$. For instance, it is easy to check that if $N = 100$ and 50\% of the nodes are malicious (which anyway should be a quite strong assumption), then setting $w$ to 30 results in a success probability of $1.8 \cdot 10^{-8}$. This means that if an epoch is 5 min long, then the adversary will compromise 1 measurement during 458 years in average.

Finally, also note that this is the success probability of the adversary in a single slot. This means that a supplier that succeeds the previous attack only gets a single (noisy) measurement of 
the customer (corresponding to a single epoch). As a node selects different participating nodes in each slot, 
the probability that the adversary gets $k$ different measurements of the node is $(1-\frac{w}{N-1})^{k(N-T-1)}$, which is even smaller.
\end{itemize}

\subsection{Lying supplier}

\subsubsection*{Lying about non-responding nodes}

In addition to deploying malicious (fake) nodes, a malicious supplier can
lie about the non-responding nodes. 
In order to recover $\hat{X}_{t}^{i}$, the supplier needs $\sum^\ell_{j=1} \mathsf{dkey}_{i,\mathsf{ind}_i[j]} + C_{i}$. The supplier has two options to retrieve this sum.
First, it might pretend that a node $v_i$ did not respond in the first round,
and asks for $v_i$'s dummy keys to its participating nodes. At the same time, the supplier claims to $v_i$ that
its participating nodes are responding. 
Hence, as described in Section \ref{sec:advanced}, the participating nodes of $v_i$ will disclose $v_i$'s
dummy keys and $v_i$ will disclose $C_i$. However, the random keys of $v_i$'s participating nodes prevent the supplier to retrieve $v_i$'s dummy keys from their messages. 

Second, the supplier can pretend that $v_i$'s participating nodes do not respond in the first round,
and asks $v_i$ for their dummy keys in the second round. 
In particular, there are three types of dummy keys: the first is shared with a malicious node, and hence, known to the supplier. The second is asked to $v_i$ by the supplier in the second round (the supplier pretends that these nodes are non-responding), and $v_i$ replies with the sum of $C_i$ and the requested keys. Finally, the rest is shared with honest participating nodes and they are not asked to $v_i$ in the second round. Apparently, if $v_i$ has at least one dummy key from the last group, its measurement cannot be recovered. This is because if $v_j$ is a participating honest node of $v_i$ and $\mathsf{dkey}_{i,\mathsf{ind}_i[j]}$ is not asked to $v_i$ in the second round, it could only be recovered from $v_j$'s messages. However, $v_j$ sends $C_{j} + \mathsf{dkey}_{i,\mathsf{ind}_i[j]}$, where $C_j$ is only known to $v_j$.    
 
Nevertheless, it might happen that $v_i$ does not have any third-type dummy key (i.e., the supplier asks $v_i$ for all the dummy keys shared with honest nodes). Then, the supplier can easily recover $v_i$'s measurement, since it knows $\sum^\ell_{j=1} \mathsf{dkey}_{i,\mathsf{ind}_i[j]} + C_i$ (they are malicious keys or provided by $v_i$). However, the supplier can only guess $v_i$'s participating nodes and target them randomly since $v_i$ also selects them randomly\footnote{Note that \emph{all} nodes send responses in the second round, and the randomness of $C_i$ ensures that the supplier cannot gain any knowledge about the participating nodes of any nodes.}. 
Assuming that the supplier can ask $v_{i}$ for maximum $M$ dummy keys in the second round, the probability that all participating nodes of $v_i$ are either malicious or specified as non-responding nodes by the supplier is less than $(1-\frac{w}{N-1})^{N-(T+M)-1}$. Using $\alpha= (T+M)/N$ and $\beta=w/N$, then $(1-\frac{w}{N-1})^{N-(T+M)-1} = (1-\frac{\beta}{1-N^{-1}})^{N(1-\alpha)-1}$.  
This probability is depicted in Figure \ref{fig:adv_succ} depending on $\alpha, \beta$ and $N$.

\begin{figure*}[ht]
\centering
\subfigure[$N$=100]
{
\includegraphics[scale=0.5]{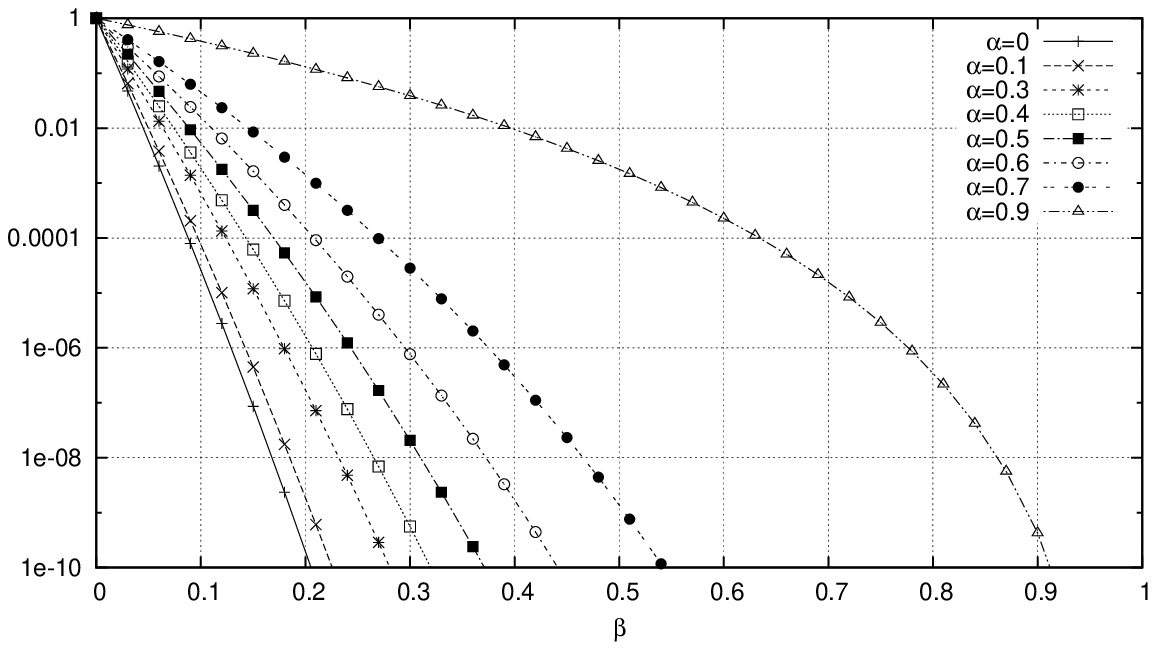}
}
\subfigure[$N$=300]
{
\includegraphics[scale=0.5]{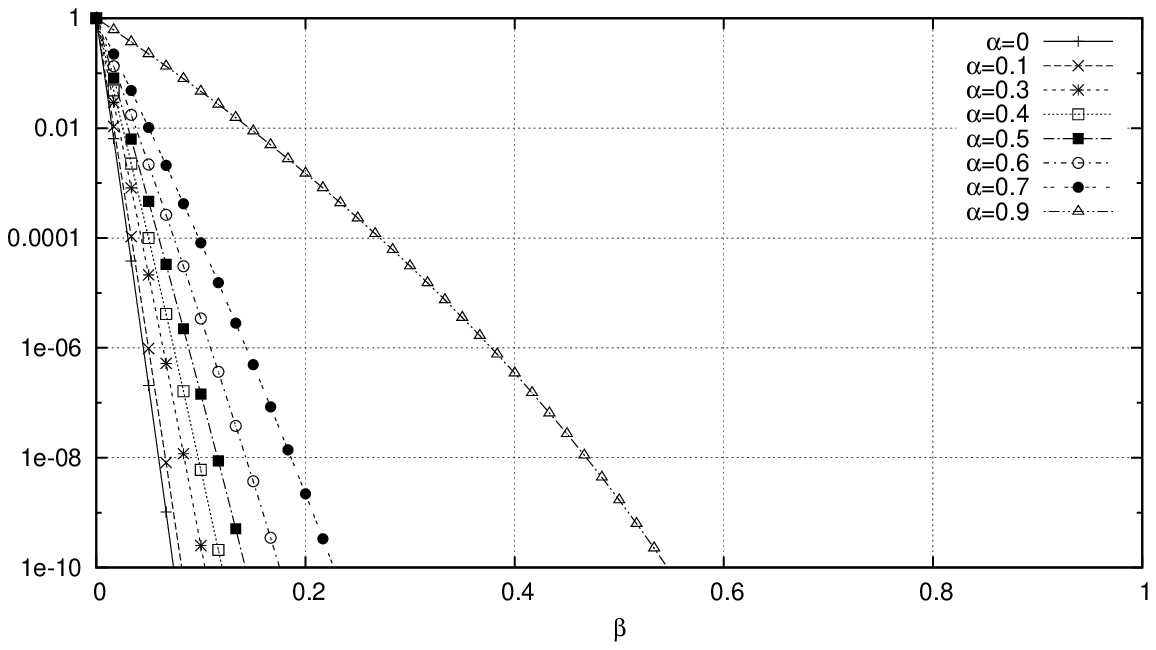}
}
\caption{Success probability of guessing participating nodes depending on $\beta$ and different values of $\alpha$ and $N$.}
\label{fig:adv_succ}
\end{figure*}

\subsubsection*{Lying about cluster size}

Another strategy for the supplier to compromise the privacy of users is to lie
about the cluster size. If the supplier pretends that the cluster
size $N'$ is larger than it really is (i.e., $N' > N$), the noise added by 
each node will be underestimated. In fact, each node will calibrate its 
gamma noise using $N'$ instead of $N$. As a result, the 
aggregated noise at the supplier will be smaller than necessary to 
guarantee sufficient differential privacy. 

In order to prevent this attack, a solution would be to set the cluster
size to a fixed value. For example, all clusters should have a size of 100.
Although simple and efficient, this solution is not flexible and might
not be applicable to all scenarios. Another option is that the supplier
publishes, together with the list of cluster nodes, a self-signed
certificate of each node of the cluster (containing a timestamp, the cluster id and the node 
information). That way, each node could verify 
the cluster size and get information about other member nodes.

\section{Simulation results}

\subsection{A high-resolution electricity trace simulator}

Due to the lack of high-resolution real world data, we implemented a domestic electricity demand model \cite{richardson10} that can generate one-minute resolution synthetic consumption data of different households\footnote{Available at \texttt{http://www.crysys.hu/\textasciitilde acs/misc/}}. It is an extended version of the simulator developed in \cite{richardson10}.  
The simulator includes 33 different appliances and implements a separate lighting model which takes into account the level of natural daylight depending on the month of the year. The number of residents in each household is randomly selected between 1 and 5. A trace is associated to a household and generated as follows: 
(1) A number of active persons is selected according to some distribution derived from real statistics. This number may vary as some members
can enter or leave the house. (2) A set of appliances is then selected and activated at different time of the day according to  another distribution,
which was also derived from real statistics. 
 
The input of the simulator is the number of households, the month of the day, and the type of the day (either a working or weekend day). The output is the power demand model (1-min profile) of all appliances in each household on the given day. Using this simulator, we generated 3000 electricity traces corresponding to different households on a working day in November, where the number of residents in each 
household was randomly selected between 1 and 5. Each trace was then sanitized according to our scheme.
The noise added in each slot (i.e., $\lambda(t)$) is set to the maximum consumption in the slot (i.e., $\lambda(t) = \max_{1\leq i \leq N}X_{t}^{i}$ where the maximum is taken on all users in the cluster). This amount of noise ensures $\varepsilon=1$ indistinguishability for individual measurements in all slots.
Although one can increase $\lambda(t)$ to get better privacy, the error will also increase. Note that the error $\mu_{\varepsilon'}(t)$ for other $\varepsilon' \neq \varepsilon$ values if $\mu_{\varepsilon}(t)$ is given is $\mu_{\varepsilon'}(t) = \frac{\varepsilon}{\varepsilon'} \cdot \mu_{\varepsilon}(t)$. We assume that $\lambda(t) = \max_{i}X_{t}^{i}$ is known a priori.

\subsection{Error according to the cluster size}  
    
The error introduced by our scheme depends on the cluster size $N$. In this section, we present how the error varies according to $N$.  

\subsubsection{Random clustering}

The most straightforward scheme to build $N$-sized clusters is to select $N$ users uniformly at random. The advantage of this approach is that users only need to send the noisy aggregate to the supplier. 
Figure \ref{fig:random_cluster} and \ref{fig:random_cluster_dev} show the average error value and its standard deviation, resp., depending on the size of the cluster. 
The average error of a given cluster size $N$ is the average of $\mathsf{mean}_{t}(\mu(t))$ of all 
$N$-sized clusters\footnote{In fact, the average error is approximated in Figure \ref{fig:random_cluster}:
we picked up 200 different clusters for each $N$, and plotted the average of their $\mathsf{mean}_{t}(\mu(t))$. 200 is chosen according to experimental analysis. Above 200, the average error do not change significantly.}. 
Obviously, higher $N$ causes smaller error. Furthermore, a high $\alpha$ results in larger noise added by each meters, as described in Section \ref{sec:extra_noise}, which also implies larger error.
Interestingly, increasing the sampling period (i.e., $T_p$) results in slight error decrease\footnote{This increase is less than 0.01 even if $N$ is small when the sampling period is changed from 5 min to 15 min.}, hence, we only considered 10 min sampling period. Otherwise noted explicitly, we assume 10 min sampling period in the sequel. 

\begin{figure*}[ht]
\centering
\subfigure[Average error]
{
\label{fig:random_cluster}
\includegraphics[scale=0.5]{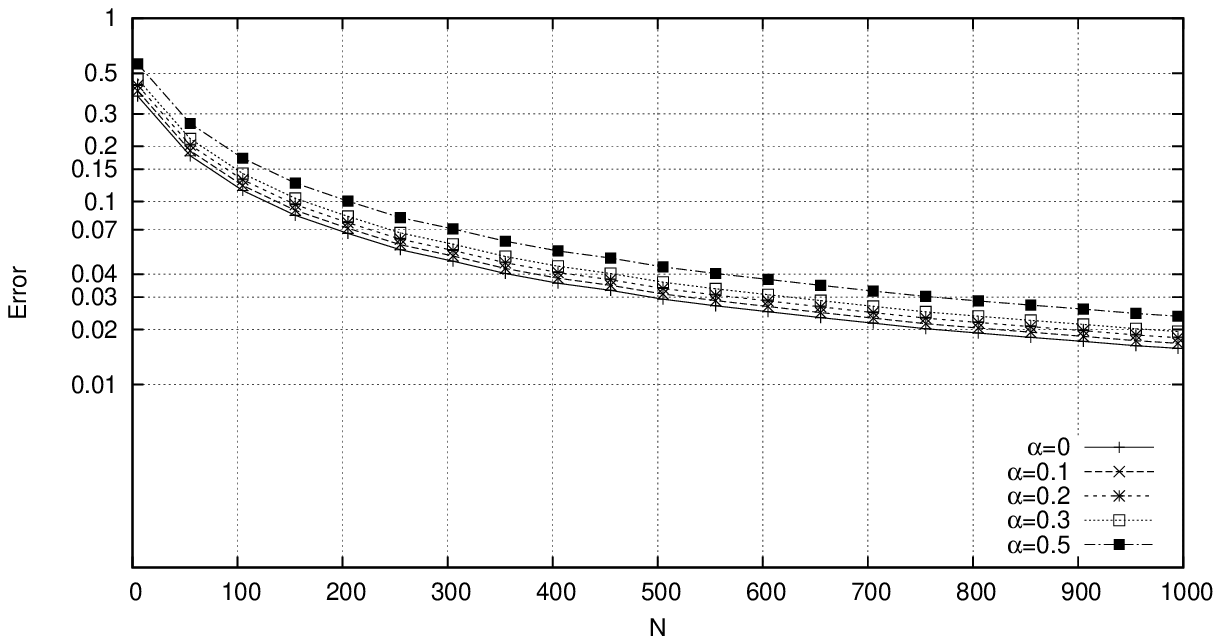}
}
\subfigure[Standard deviation of the average error]
{
\label{fig:random_cluster_dev}
\includegraphics[scale=0.5]{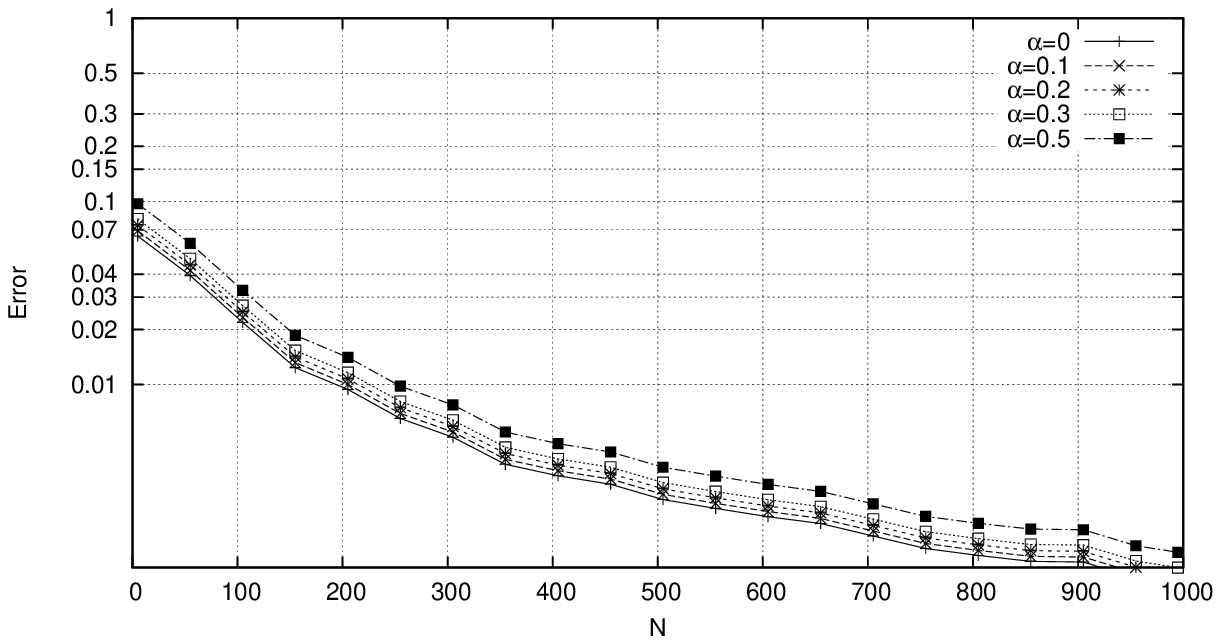}
}
\subfigure[Maximum error]
{
\label{fig:random_cluster_max}
\includegraphics[scale=0.5]{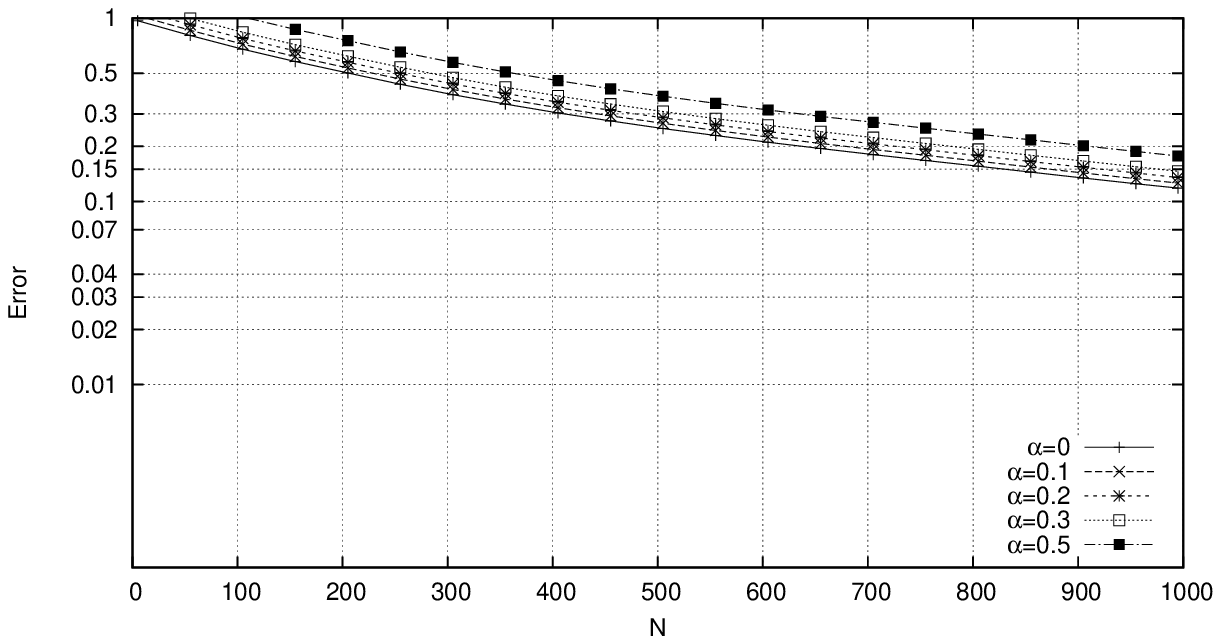}
}
\caption{\label{fig:random_cluster_main} The error depending on $N$ using random clustering. $T_{p}$ is 10 min.}

\end{figure*}

\subsubsection{Consumption based clustering}

As $\lambda(t)$ is set to the maximum consumption at $t$ inside a cluster, we could get lower error if the maximum consumption is close to the mean of the measurements within a cluster in every $t$. Hence, instead of randomly clustering users, a more clever approach is to cluster them based on the ``similarity'' of their consumption profiles. Intuitively, the measurements in similar profiles are close, and thus, the difference between the maximum consumption and the average should also be smaller than in a random cluster.

We measure profile similarity by the average daily consumption: the $N$-sized clusters are created by calculating daily consumption levels $\ell_1, \ell_2, \ldots, \ell_n$ (where $\ell_i \leq \ell_{i+1}$ for all $1 \leq i \leq n-1$) such that the number of users whose daily average is between $\ell_i$ and $\ell_{i+1}$ for all $i$ is exactly $N$. Then, all users being in the same level form a cluster.  
In contrast to random clustering, users need to provide the supplier with their daily averages which may leak some private information. However, this can also be derived from the (monthly) aggregate consumption of each user, which is generally revealed for the purpose of billing.

Figure \ref{fig:cons_cluster} and \ref{fig:cons_cluster_dev} show the average error and its deviation, resp., calculated identically to random clustering. Comparing Figure \ref{fig:cons_cluster_main} and \ref{fig:random_cluster_main}, consumption based clustering has lower error than the random one. The improvement varies up to 5\% depending on $N$.
For instance, while random clustering provides an average error of 0.13 with $N=100$ users in a cluster, consumption based clustering has 0.07. The difference decreases as $N$ increases. There are more significant  differences between the standard deviations and the worst cases: at lower values of $N$, the standard deviation of the average error in random clustering is almost twice as large as in consumption based clustering (Figure \ref{fig:cons_cluster_dev} and \ref{fig:random_cluster_dev}). To compute the worst case error, at a given $N$, the maximum error is computed in all slots, which is the highest cluster error that can occur in a slot with cluster size $N$. Then, the average of these maximum errors (the average is taken on all slots) are plotted in Figure \ref{fig:random_cluster_max} and \ref{fig:cons_cluster_max}. Apparently, the worst case error in random clustering is much higher than in consumption based clustering, as random clustering may put high and low consuming users into the same cluster.

\begin{figure*}[ht]
\centering
\subfigure[Average error]
{
\label{fig:cons_cluster}
\includegraphics[scale=0.5]{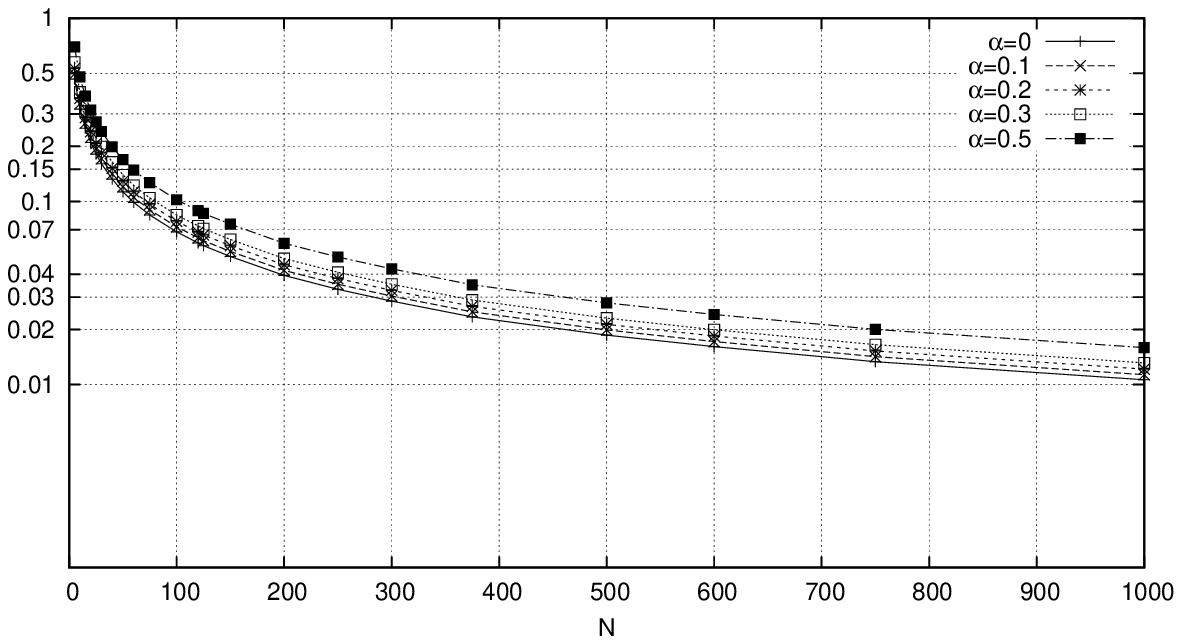}
}
\subfigure[Standard deviation of the average error]
{
\label{fig:cons_cluster_dev}
\includegraphics[scale=0.5]{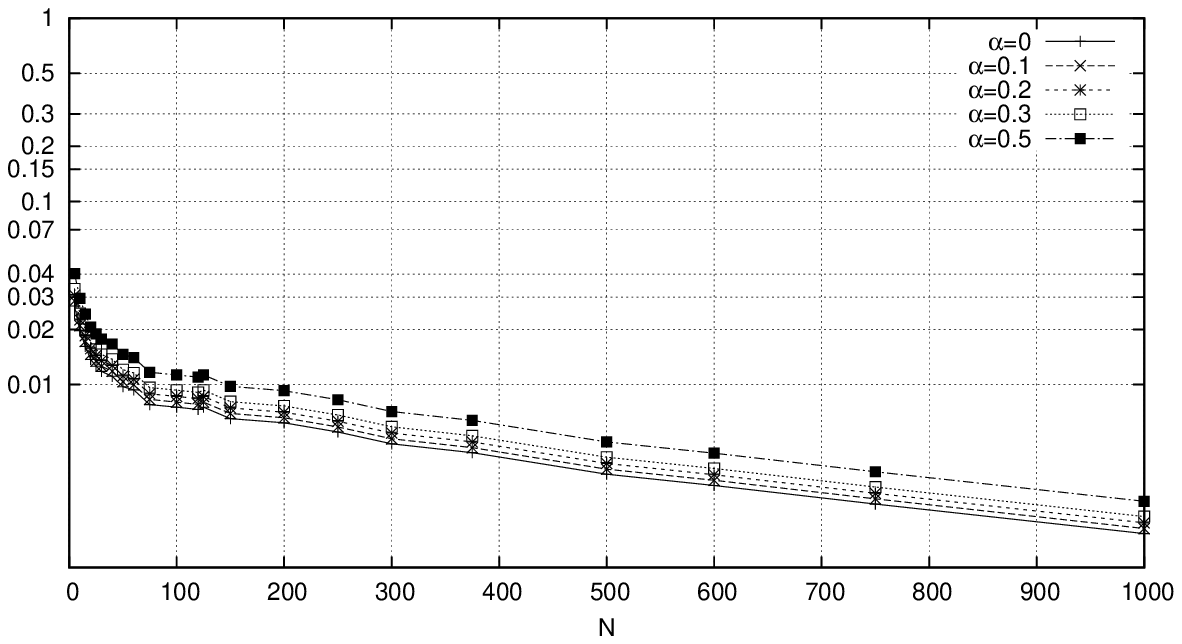}
}
\subfigure[Maximum error]
{
\label{fig:cons_cluster_max}
\includegraphics[scale=0.5]{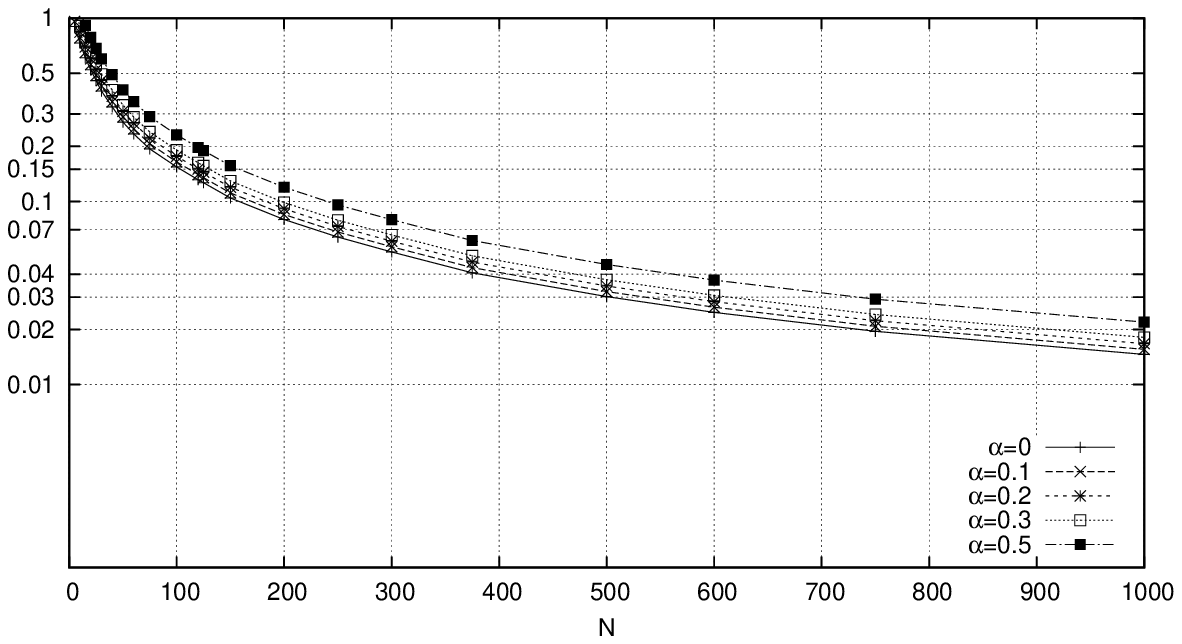}
}
\caption{The error depending on $N$ using consumption based clustering. $T_{p}$ is 10 min. \label{fig:cons_cluster_main} 
}
\end{figure*}

\subsection{Privacy over multiple slots}
So far, we have considered the privacy of individual slots, i.e. added noise to guarantee $\varepsilon=1$ privacy in each slot of size 10 minutes. However, a trace
is composed of several slots. For instance, if a user watches TV during multiple slots, we have guaranteed that an adversary cannot tell if the TV is watched in any particular slot (up to $\varepsilon=1$). However, by analysing $s$ consecutive slots corresponding to a given period,  it may be able to tell whether the TV was watched during that period (the privacy bound of this is $\varepsilon_s = \varepsilon \cdot s$ due to the composition property of differential privacy).   
Based on Theorem \ref{thm:histo}, we need to add noise $\lambda(t) = \sum_{i=1}^{s} \max_{i} X_{t}^{i}$ to \emph{each} aggregate to guarantee $\varepsilon_s=1$ bound in consecutive $s$ slots, which, of course, results in higher error than in the case of $s=1$ that we have assumed so far. Obviously, using the LPA technique, we cannot guarantee reasonably low error if $s$ increases, as the necessary noise $\lambda(t) = \sum_{i=1}^{s} \max_{i} X_{t}^{i}$ can be large. In order to keep the error $\lambda(t)/\sum_{i=1}^N X_{t}^i$ low while ensuring better privacy than $\varepsilon_s = s \cdot \varepsilon$, one can increase the number of users inside each cluster (i.e., $N$).

Figure \ref{fig:privacy_all_main} shows what average privacy of a user has, in our dataset, as a function of 
 the cluster size and value  $s$. 
As the cluster size increases, the privacy bound decreases (i.e. privacy increases). The reason is that when the cluster
size increases, the maximum consumption also increases with high probability. Since the noise is calibrated according to 
the maximum consumption within the cluster, it will be larger. This results in better privacy.

\begin{figure*}[ht]
\centering
\subfigure[All appliances]
{
\label{fig:privacy_all_main} 
\includegraphics[scale=0.5]{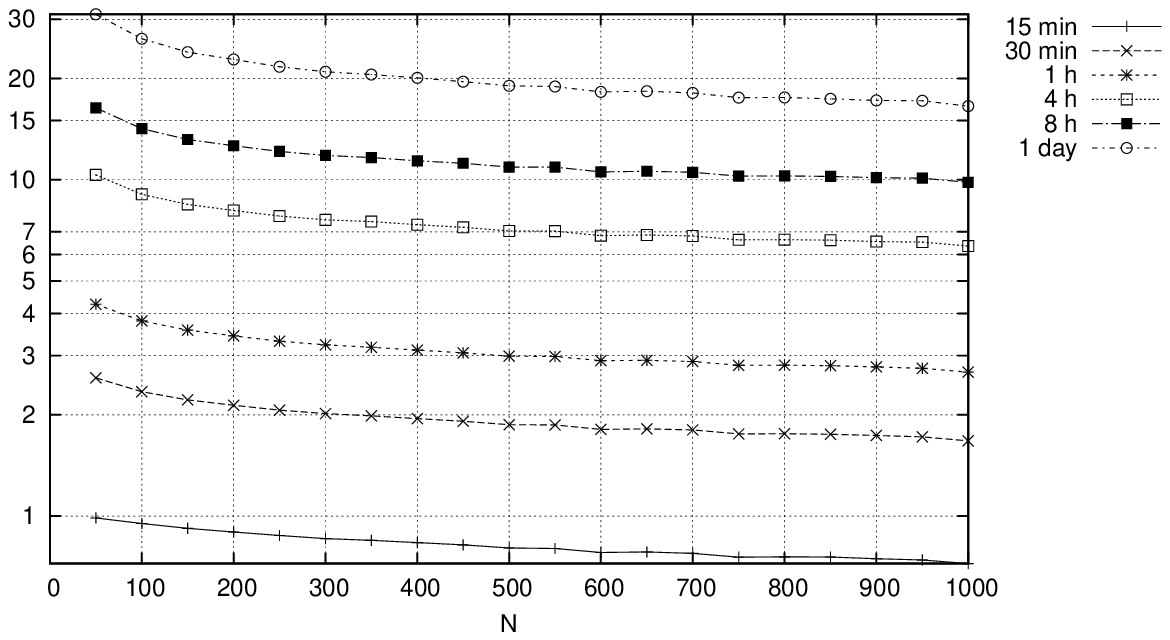}
}
\subfigure[Active appliances]
{
\label{fig:privacy_active_all}
\includegraphics[scale=0.5]{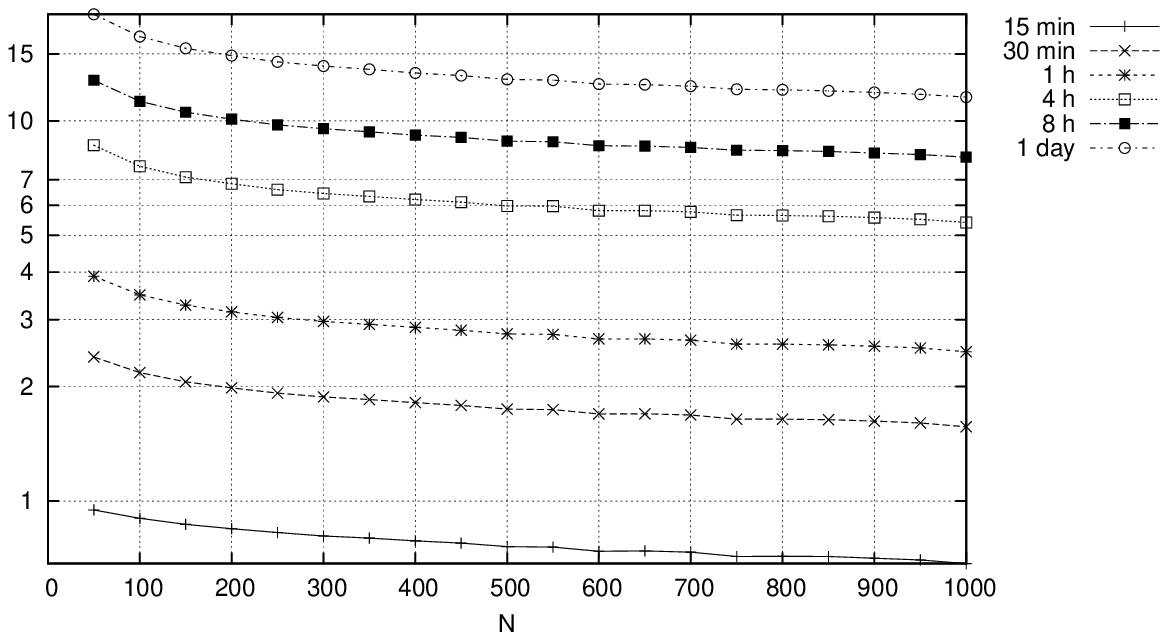}
}
\caption{Privacy of appliances in $s$ long time windows (where $s$ is 10 min, 15 min, 30 min, 1 h, 4 h, 8 h, 1 day).}
\end{figure*}

\subsubsection{Privacy of appliances}
In the previous section, we analysed how a user's privacy  varies over time. In this section, we consider the privacy of the different appliances. For example, we aim at answering the following question: {\em what was the user's privacy when he was watching TV last evening between 18:00 and 20:00?} More specifically, we consider two privacy threats: 
\begin{itemize}
\item \emph{Presence of appliances}: Can the adversary tell that the user watched TV yesterday?
In order to compute the corresponding privacy (i.e. $\varepsilon_s$), we compute $\sum_{t=108}^{120} \varepsilon(t)$, where  $\varepsilon(t) = \{\text{TV's consumption in $t$}\}/\lambda(t)$.
\item \emph{Activation time of appliances}: If the adversary knows that the user watched TV, can he tell what time he did it?
We use statistical inference to detect the position of an appliance signature in the noisy trace. 
\end{itemize}

\paragraph{Presence of appliance:} We summarized some of the appliance privacy in Table \ref{tab:devices} in Appendix \ref{app:table}. Each value is computed by averaging the privacy provided in our 3000 traces.
The appliances can be divided into two major groups: the usage of active appliances indicate that the user is at home and uses the appliance (their consumption significantly changes during their active usage such as iron, vacuum, kettle, etc.), whereas passive appliances (like fridge, freezers, storage heater, etc.) have more or less identical consumption regardless the user is at home or not. 
In general, appliances having lower consumption threats privacy less than devices with higher energy demands.  Obviously, $\varepsilon_s$ increases when $s$ increases since an appliance is used more frequently within longer periods.

Finally, we want to measure the privacy of active appliances. This is equivalent to answer the question if the user was at home in any $s$ long period. The average privacy are depicted in Figure \ref{fig:privacy_active_all}. Observe that there is no considerable differences between Figures \ref{fig:privacy_active_all} and \ref{fig:privacy_all_main}, as a profile is primarily shaped by active appliances (because they typically consume much more than passive appliances). 


\paragraph{Activation time of appliances:}
Consider the consumption profile $\mathbf{V}=(V_1, V_2, \ldots, V_n)$ of a given appliance of a user (on a single day), where the appliance is switched on at $t_{s}$ first and switched off at $t_s+d$ last (i.e., $V_i =0$ for $1 \leq i < t_s$ and $t_s+d < i \leq n$).
The signature of the appliance $\sig(\mathbf{V}) = (V_{t_s}, V_{t_s+1}, \ldots, V_{t_s+d})$ is the consumption profile of the appliance between $t_{s}$ and $t_{s}+d$.
The adversary is provided with the noisy consumption profile of the appliance (i.e., $\hat{\mathbf{V}}$) and, in addition, knows the signature of the appliance, but it does not know $t_{s}$	
(i.e., it knows that the appliance was used with the given signature but does not know when).

The goal of the adversary is to infer the starting slot $t_{s}$ in $\mathbf{V}$ using $\hat{\mathbf{V}}$.
If the adversary's guess is $t'$, the inference accuracy is measured by $|t'-t_{s}|$.
We consider the following adversaries:
\begin{itemize}
\item \rndadv: This is the simple random guesser and serves as a baseline. If there are $n-d$ possible values of $t_{s}$, then the guess $t'$ is selected out of them uniformly at random.
\item \statadv: This adversary knows the relative frequency of each slot occuring as a starting slot (denoted by $f_i$ at slot $i$), and guesses the most likely starting slot: $t' = \max_i f_i$ ($1 \leq i \leq n-d$). This information is publicly available from several surveys \cite{richardson10}.  
\item \badv: This adversary performs bayesian inference on $t_s$. 
In particular, let $\mathbf{V}^t$ denote a profile where the signature starts at slot $t$ (i.e., $\mathbf{V}^t$ is obtained by shifting $\mathbf{V}$ with $|t-t_s|$ positions to left/right if $t-t_s$ is negative/positive.\footnote{More formally, $\mathbf{V}^t=0$ for all $1 \leq i < t$ and $t+d < i \leq n$, and $V_i = \sig(\mathbf{V})_i$ for $1 \leq i \leq t+d$}). 
Assuming that 
the adversary has no prior knowledge about the distribution of starting slots (i.e., they are distributed uniformly at random), the posterior distribution is
computed as 
$$
P(\mathcal{T} = i) =  \frac{\prod_{k=1}^{n}  P(V^i_k + \mathcal{L}(\lambda_k) = \hat{V}_k)}{\sum_{j=1}^{n-d} \prod_{k=1}^{n}P(V^j_k + \mathcal{L}(\lambda_k) = \hat{V}_k)}
$$
where $\mathcal{T}$ describes the posterior distribution of starting slots.
As the bayes risk is "linear" in our case (i.e., $|t'-t_s|$), the bayes' estimate (i.e., $t'$) is the
posterior median (i.e., $t'$ satisfies $P(\mathcal{T} \leq t') \geq 0.5$ and $P(\mathcal{T} \geq t') \leq 0.5$).

\item \bstatadv: We expect better results if the bayesian adversary uses the relative frequencies as a prior
knowledge. In particular, the adversary knows the probability distribution of starting slots a priori, denoted by $\theta = \{f_1, f_2, \ldots, f_{n-d}\}$, which is described by the relative frequencies:
$$
P(\mathcal{T} = i|\theta) = \frac{ \prod_{k=1}^{n} f_i \cdot P(V^i_k + \mathcal{L}(\lambda_k) = \hat{V}_k)}{\sum_{j=1}^{n-d} \prod_{k=1}^{n}P(V^j_k + \mathcal{L}(\lambda_k) = \hat{V}_k)\cdot f_j} 
$$
As before, the bayes' estimate is the posterior median.
\end{itemize}

The inference accuracy of each adversary is shown in Table \ref{tab:devices2} in Appendix \ref{app:table}. The inference is performed on our dataset within a single day.
\bstatadv outperforms all adversaries especially for active devices, however, its accuracy never falls below  1.7 hour. Regarding the passive appliances, \statadv overcomes \bstatadv in general. This is explained by the fact that passive appliances usually follow a regular operation cycle with less user intervention in all households, and the accuracy of \statadv 's is always within the length of one operation cycle independently of the added noise\footnote{The same type of appliance is used in all households.}.

\section{Conclusion}
Our measurements show two different, and conflicting, results. Figure \ref{fig:privacy_all_main} shows that it may actually be 
difficult to hide the presence of activities in a household. 
In fact,  computed $\varepsilon$ values are quite high, even for large clusters. However, results presented
in Tables \ref{tab:devices} and \ref{tab:devices2}  are more encouraging. They show that, although, it might be difficult to hide a user's presence,
it is still possible to hide his actual activity. In fact, appliances privacy bounds ($\varepsilon$ values) are quite small, which indicates
that an adversary will have difficulty telling whether the user is, for example, using his computer or watching TV during a given period
of time. Furthermore, in Table \ref{tab:devices2}, results show that it is even more difficult for an adversary to tell when a given activity actually started. 
Finally, we recall that in order to keep the error $\lambda(t)/\sum_{i=1}^N X_{t}^i$ low while ensuring better privacy one can always 
increase the number of users inside each cluster. For instance, doubling $N$ from 100 to 200 allows to double the
noise while keeping approximately the same error value (0.118 in Figure \ref{fig:cons_cluster} if $\alpha=0$). 
This results in much better privacy, since, on average, doubling the noise halves the privacy parameter $\varepsilon_s$.    

Although more work and research is needed, we believe this is a encouraging  result for privacy. 

\section*{Acknowledgements}
The work presented in this paper was supported in part by the European 
Commission within the STREP WSAN4CIP project. The views and conclusions 
contained herein are those of the authors and should not be interpreted 
as representing the official policies or endorsement of the WSAN4CIP project 
or the European Commission.

\bibliographystyle{abbrv}
\bibliography{paper}

\newpage
\appendix
\section{Proof of Theorem \ref{thm:utility} (Utility)}
\label{app:proof}

\begin{lemma}[Integral property of the Bessel function \cite{kotz01book}]
\label{lem:bessel}
Let 
$$
K_{\vartheta}(x)  = \frac{1}{2} \left( \frac{x}{2} \right)^\vartheta \int_{0}^{\infty} t^{-\vartheta-1} \exp\left(
-t - \frac{x^2}{4t}\right)dt,\qquad x > 0  
$$
define the modified Bessel function of the third kind with index $\vartheta \in \mathbb{R}$.
For any $\gamma>0$ and $\gamma, \nu$ such that $\gamma + 1 \pm \nu > 0$
$$
\int_{0}^{\infty} x^{\gamma} K_{\nu}(ax)dx = \frac{2^{\gamma-1}}{a^{\gamma+1}} \Gamma\left(\frac{1+\gamma+\nu}{2}\right) \Gamma\left(\frac{1+\gamma-\nu}{2}\right)
$$
\end{lemma}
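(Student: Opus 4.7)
The plan is to prove the stated integral identity by substituting the given integral representation of $K_\nu$ into the left-hand side, swapping the order of integration via Fubini, and evaluating the two resulting single integrals in terms of $\Gamma$. Concretely, the starting point is
\[
\int_0^\infty x^{\gamma} K_\nu(ax)\,dx
= \frac{1}{2}\left(\frac{a}{2}\right)^{\nu} \int_0^\infty \!\!\int_0^\infty x^{\gamma+\nu}\, t^{-\nu-1}\, \exp\!\left(-t-\frac{a^2 x^2}{4t}\right) dt\, dx.
\]
Everything after this reduces to standard Gamma-integral manipulations.

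The first step is to justify Fubini and swap the $dx$ and $dt$ integrals. The integrand is positive, so by Tonelli the swap is legitimate as soon as one of the iterated integrals is finite; the hypotheses $\gamma+1\pm\nu>0$ are precisely what guarantees this, as will be visible from the explicit evaluation below. After swapping, the inner integral becomes
\[
I(t) \;=\; \int_0^\infty x^{\gamma+\nu}\, \exp\!\left(-\frac{a^2 x^2}{4t}\right) dx.
\]
I evaluate $I(t)$ by the substitution $u = a^2 x^2/(4t)$, so that $x = (2\sqrt{tu})/a$ and $dx = (\sqrt{t}/(a\sqrt{u}))\,du$. This turns $I(t)$ into a Gamma integral in $u$:
\[
I(t) \;=\; \frac{2^{\gamma+\nu}}{a^{\gamma+\nu+1}}\, t^{(\gamma+\nu+1)/2} \int_0^\infty u^{(\gamma+\nu-1)/2} e^{-u}\,du
\;=\; \frac{2^{\gamma+\nu}}{a^{\gamma+\nu+1}}\, t^{(\gamma+\nu+1)/2}\, \Gamma\!\left(\tfrac{1+\gamma+\nu}{2}\right),
\]
where convergence at $u=0$ requires $\gamma+\nu+1>0$ (one of the two stated hypotheses).

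Plugging $I(t)$ back in and collecting powers of $t$ leaves
\[
\int_0^\infty x^{\gamma} K_\nu(ax)\,dx
= \frac{2^{\gamma-1}}{a^{\gamma+1}}\, \Gamma\!\left(\tfrac{1+\gamma+\nu}{2}\right) \int_0^\infty t^{(\gamma-\nu-1)/2}\, e^{-t}\,dt.
\]
The final integral is again a Gamma integral, equal to $\Gamma\!\left(\tfrac{1+\gamma-\nu}{2}\right)$, which exists precisely when $\gamma-\nu+1>0$ (the other stated hypothesis). Combining the two factors yields the claimed formula.

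The main obstacle is purely bookkeeping: one must be careful with the two simultaneous constraints $\gamma+1\pm\nu>0$, verifying that they are exactly what is needed both for the inner $u$-integral to converge at $0$ and for the outer $t$-integral to converge at $0$ (convergence at $\infty$ is automatic thanks to the exponential factors). No analytic subtlety arises beyond that, since positivity of the integrand makes Tonelli immediate, and the two substitutions are elementary. Thus the proof is essentially a careful bookkeeping exercise in reducing a double integral to a product of two Gamma functions.
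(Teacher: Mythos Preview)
Your proof is correct. The substitution of the integral representation for $K_\nu$, the application of Tonelli (the integrand is nonnegative), and the reduction to two $\Gamma$-integrals are all carried out accurately; the constants and powers of $t$ combine exactly to $\dfrac{2^{\gamma-1}}{a^{\gamma+1}}\,\Gamma\!\left(\tfrac{1+\gamma+\nu}{2}\right)\Gamma\!\left(\tfrac{1+\gamma-\nu}{2}\right)$, and the two convergence constraints you identify coincide with the stated hypotheses $\gamma+1\pm\nu>0$.

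Regarding comparison with the paper: the paper does \emph{not} prove this lemma. It is quoted as a known identity from \cite{kotz01book} and then used as a tool in the proof of Lemma~\ref{lem:abs}. So your argument is not an alternative to the paper's proof but rather a self-contained derivation of a result the paper imports from the literature. What your approach buys is that the reader need not consult the reference: the identity follows in a few lines from the very integral representation of $K_\nu$ that the lemma itself provides.
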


\begin{lemma}
\label{lem:abs}
Let $\mathcal{G}_{1}, \mathcal{G}_{2}$ be i.i.d gamma random variables with parameters $(n, \lambda)$. Then,
\begin{align}
\label{eq:exp}
\mathbb{E}|\mathcal{G}_{1}(n, \lambda) - \mathcal{G}_{2}(n, \lambda)| = \frac{2\lambda}{B\left(\frac{1}{2},\frac{1}{n}\right)} 
\end{align}
and
\begin{align}
\label{eq:var}
\mathit{Var}|\mathcal{G}_{1}(n, \lambda) - \mathcal{G}_{2}(n, \lambda)| = \left( \frac{2}{n} - \frac{4}{B\left(\frac{1}{2}, \frac{1}{n}\right)^2} \right) \lambda^2
\end{align}
where $B(x,y)$ is the beta function defined as $B(x,y) = \frac{\Gamma(x)\Gamma(y)}{\Gamma(x+y)}$.
\end{lemma}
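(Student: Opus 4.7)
The plan is to compute the two quantities separately, using that $Y = \mathcal{G}_1(n,\lambda) - \mathcal{G}_2(n,\lambda)$ has a symmetric distribution (since $\mathcal{G}_1, \mathcal{G}_2$ are i.i.d.) with a density expressible in terms of a modified Bessel function of the third kind. This representation is exactly what makes the hypothesis of Lemma \ref{lem:bessel} directly applicable.

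\textbf{Step 1: Density of $Y$.} I would first write down the density of $Y$. Since $\mathcal{G}_1$ and $-\mathcal{G}_2$ are independent with gamma and reflected-gamma densities (shape $1/n$, scale $\lambda$), a standard convolution calculation (cf.\ Kotz, Kozubowski and Podg\'orski \cite{kotz01book}) gives
\[
f_Y(y) = \frac{1}{\sqrt{\pi}\,\Gamma(1/n)\,\lambda}\left(\frac{|y|}{2\lambda}\right)^{1/n-1/2} K_{1/n-1/2}\!\left(\frac{|y|}{\lambda}\right), \qquad y \in \mathbb{R}.
\]
I would quote this rather than derive it from scratch, since the paper is already citing \cite{kotz01book} for Lemma \ref{lem:gamma} and Lemma \ref{lem:bessel}.

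\textbf{Step 2: The expected absolute value.} By symmetry, $\mathbb{E}|Y| = 2\int_0^\infty y\, f_Y(y)\,dy$. After substituting the density and making the change of variable $u = y/\lambda$, the integral reduces to
\[
\mathbb{E}|Y| = \frac{2\lambda}{\sqrt{\pi}\,\Gamma(1/n)}\, 2^{1/2 - 1/n}\int_0^\infty u^{1/2 + 1/n}\, K_{1/n - 1/2}(u)\,du.
\]
Then I would apply Lemma \ref{lem:bessel} with $a=1$, $\gamma = 1/2 + 1/n$, $\nu = 1/n - 1/2$. The two Gamma factors become $\Gamma\!\big(\tfrac{1+\gamma+\nu}{2}\big) = \Gamma(1/2+1/n)$ and $\Gamma\!\big(\tfrac{1+\gamma-\nu}{2}\big) = \Gamma(1) = 1$, and the powers of $2$ cancel cleanly, leaving
\[
\mathbb{E}|Y| = \frac{2\lambda\,\Gamma(1/2+1/n)}{\Gamma(1/2)\,\Gamma(1/n)} = \frac{2\lambda}{B(1/2,\,1/n)},
\]
using $\Gamma(1/2)=\sqrt{\pi}$ and the definition of the beta function. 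This gives equation \eqref{eq:exp}.

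\textbf{Step 3: The variance of $|Y|$.} Since $|Y|^2 = Y^2$, I would write $\mathrm{Var}|Y| = \mathbb{E}[Y^2] - (\mathbb{E}|Y|)^2$. By independence and the fact that each $\mathcal{G}_i$ has mean $\lambda/n$ and variance $\lambda^2/n$ (shape $1/n$, scale $\lambda$), one gets $\mathbb{E}[Y^2] = \mathrm{Var}(Y) = 2\lambda^2/n$ (the mean of $Y$ is $0$). Combined with the result of Step~2, this yields
\[
\mathrm{Var}|Y| = \frac{2\lambda^2}{n} - \frac{4\lambda^2}{B(1/2,\,1/n)^2} = \left(\frac{2}{n} - \frac{4}{B(1/2,\,1/n)^2}\right)\lambda^2,
\]
which is \eqref{eq:var}.

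\textbf{Main obstacle.} The only nontrivial part is Step~1: recognizing that the density of the difference of two i.i.d.\ gamma variables is a Bessel function density, and matching parameters so that the integral in Step~2 fits exactly the form required by Lemma \ref{lem:bessel}. Once the indices $\gamma$ and $\nu$ are chosen so that $1+\gamma-\nu = 2$, one of the two gamma factors collapses to $1$ and the answer drops out as a clean ratio $\Gamma(1/2+1/n)/[\Gamma(1/2)\Gamma(1/n)]$, which is the reciprocal of $B(1/2,1/n)$. Everything else is bookkeeping with powers of $2$ and the elementary variance formula for gamma random variables.
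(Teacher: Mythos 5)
Your proof is correct and follows essentially the same route as the paper's: both express $Y=\mathcal{G}_1-\mathcal{G}_2$ through its symmetric Bessel-function density, evaluate $\mathbb{E}|Y|$ with the integral identity of Lemma \ref{lem:bessel} (with the same parameter choice that collapses one factor to $\Gamma(1)$), and obtain the variance from $\mathbb{E}[Y^2]=2\lambda^2/n$ together with $(\mathbb{E}|Y|)^2$. The only cosmetic difference is how the density is obtained: the paper identifies it via the characteristic function as a generalized asymmetric Laplace law, while you quote the convolution form directly from the same reference.
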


\begin{proof}[of Lemma \ref{lem:abs}]
Consider $\mathcal{Y} = \mathcal{G}_{1} - \mathcal{G}_{2}$. The characteristic function of $\mathcal{Y}$ is
$$
\phi_{\mathcal{Y}}(t) = \left(\frac{1}{1+i\lambda t}\right)^{\frac{1}{n}} \cdot \left(\frac{1}{1-i\lambda t}\right)^{\frac{1}{n}} = \left(\frac{1}{1+\lambda^2 t^2}\right)^{\frac{1}{n}}
$$
which is a special case of the characteristic function of the Generalized Asymetric Laplace distribution (GAL) with parameters $(\theta, \kappa, \omega, \tau)$:
$$
\phi_{\mathit{GAL}}(t) = e^{i\theta t}  \left(\frac{1}{1+i\frac{\sqrt{2}}{2} \omega \kappa t}\right)^{\tau} \cdot \left(\frac{1}{1-i\frac{\sqrt{2}}{2 \kappa} \omega t}\right)^{\tau}
$$
where $\theta = 0, \kappa = 1, \omega = \sqrt{2}\lambda$, and $\tau = 1/n$.
The density function of $GAL(\theta, \kappa, \omega, \tau)$ when $\theta = 0$ and $\kappa = 1$ is
$$
f_{\mathit{GAL}}(x) = \frac{\sqrt{2}}{ \omega^{\tau+1/2} \Gamma(\tau)\sqrt{\pi}} \left( \frac{|x|}{\sqrt{2}}  \right)^{\tau-1/2} K_{\tau-1/2}(\sqrt{2}|x|/\omega)
$$
where $K_{\tau-1/2}(\frac{\sqrt{2}}{\omega}|x|)$ is the Bessel function defined in Lemma \ref{lem:bessel}.
In addition,
$$
\mathbb{E}|\mathcal{Y}| = \int_{-\infty}^{\infty} |x| f_{\mathit{GAL}}(x) dx= 2 \cdot \int_{0}^{\infty} x   \frac{\sqrt{2}}{ \omega^{\tau+1/2} \Gamma(\tau)\sqrt{\pi}} \left( \frac{x}{\sqrt{2}}  \right)^{\tau-1/2} K_{\tau-1/2}(\sqrt{2}x/\omega) dx
$$
which follows from the symmetry property of $f_{\mathit{GAL}}(x)$ ($\phi_{\mathcal{Y}}(t)$ is is real valued). After reformulation, we have
$$
\mathbb{E}|\mathcal{Y}| =  \frac{2\sqrt{2}}{ \sqrt{2}^{\tau-1/2} \omega^{\tau+1/2} \Gamma(\tau)\sqrt{\pi}} \int_{0}^{\infty} x^{\tau+1/2} K_{\tau-1/2}(\sqrt{2}x/\omega) dx
$$
Now, we can apply Lemma \ref{lem:bessel} for the integral and we obtain 
$$
\mathbb{E}|\mathcal{Y}| =  \sqrt{2}\cdot w \cdot \frac{\Gamma(\tau+\frac{1}{2})}{\Gamma(\frac{1}{2})\sqrt{\pi}}  
$$
after simple derivation.
Using that $\sqrt{\pi} = \Gamma(1/2)$ and $B(x,y) = \frac{\Gamma(x)(y)}{\Gamma(x+y)}$, we have
$$
\mathbb{E}|\mathcal{Y}| =  \frac{\sqrt{2}}{B(1/2, \tau)}\cdot w
$$
Applying $\omega = \sqrt{2}\lambda$ and $\tau = 1/n$, we arrive at
Equation \eqref{eq:exp}.

To prove Equation \eqref{eq:var}, consider that
$$
\mathit{Var}(|\mathcal{Y}|) = \mathbb{E}|\mathcal{Y}|^2 - [\mathbb{E}|\mathcal{Y}|]^2
$$
where 
$$ 
\mathbb{E}|\mathcal{Y}|^2 = \mathbb{E}(\mathcal{Y}^2) = \mathbb{E}(\mathcal{G}_{1}^2) + \mathbb{E}(\mathcal{G}_{2}^2) - 2 \cdot \mathbb{E}(\mathcal{G}_{1}) \cdot \mathbb{E}(\mathcal{G}_{2})
$$
Using that $\mathbb{E}(\mathcal{G}_
{1}^2) = \mathbb{E}(\mathcal{G}_{2}^2) = (1/n^2 + 1/n)\lambda^2$,   
we obtain Equation \eqref{eq:var}.

\end{proof}
Now, we can easily prove Theorem \ref{thm:utility}.

\begin{proof}[of Theorem \ref{thm:utility}]
$$
\mathbb{E} | \sum_{i=1}^{N} (X_{t}^{i} - \hat{X}_{t}^{i}) | =
$$
$$
= \mathbb{E}|\sum_{i=1}^{N}\mathcal{G}_{1}(N-M, \lambda) - \sum_{i=1}^{N} \mathcal{G}_{2}(N-M, \lambda)| =
$$
(using that $\sum_{i=1}^{n}\mathcal{G}(k_{i}, \lambda) = \mathcal{G}(1/\sum_{i=1}^{n} \frac{1}{ k_{i}}, \lambda)$)
$$
= \mathbb{E}|\mathcal{G}_{1}(1-M/N, \lambda) - \mathcal{G}_{2}(1-M/N, \lambda)| = 
$$
(using $\alpha = M/N$ and applying Lemma \ref{lem:abs})
$$
= \frac{2}{B(1/2,\frac{1}{1-\alpha})}\lambda
$$
The standard deviation 
$\sqrt{\mathit{Var} | \sum_{i=1}^{N} (X_{t}^{i} - \hat{X}_{t}^{i}) |}$ can be derived identically.

\end{proof}

\section{Privacy of some ordinary appliances}
\label{app:table}
\begin{landscape}
\begin{table*}
\footnotesize

\begin{tabular}{|c|l|l|l|l|l|l|l|l|l|l|l|l|l|l|l|l|}
\hline
& \multirow{2}{*}{\emph{Appliance}}  &\multicolumn{3}{c|}{$s = 30\, \text{min}$} & \multicolumn{3}{c|}{$s = 1\, \text{h}$} & \multicolumn{3}{c|}{$s = 4\, \text{h}$} & \multicolumn{3}{c|}{$s = 8\, \text{h}$} & \multicolumn{3}{c|}{$s = 24\, \text{h}$}   \\
\cline{3-17}
& & \emph{mean} & \emph{dev} & \emph{max} & \emph{mean} & \emph{dev} & \emph{max} & \emph{mean} & \emph{dev} & \emph{max} & \emph{mean} & \emph{dev} & \emph{max} & \emph{mean} & \emph{dev} & \emph{max} \\
\hline
\hline
\multirow{23}{*}
{
\begin{sideways}
\textbf{Active appliances}
\end{sideways}
}

& Lighting & 0.91 & 1.28 & 17.87 & 1.29 & 1.37 & 18.84 & 2.68 & 1.82 & 19.38 & 3.63 & 2.29 & 21.49 & 4.89 & 2.97 & 25.37\\ 
 \cline{2-17}
&  Cassette / CD Player & 0.02 & 0.04 & 0.79 & 0.04 & 0.04 & 0.81 & 0.05 & 0.05 & 0.82 & 0.07 & 0.05 & 0.88 & 0.09 & 0.07 & 0.96\\ 
\cline{2-17}
&  Hi-Fi & 0.10 & 0.17 & 4.43 & 0.16 & 0.19 & 4.59 & 0.17 & 0.20 & 4.62 & 0.18 & 0.21 & 4.62 & 0.19 & 0.21 & 4.62\\ 
\cline{2-17}
&  Iron  &0.75 & 1.81 & 42.91 & 0.82 & 1.82 & 42.99 & 0.92 & 1.83 & 42.99 & 1.00 & 1.86 & 42.99 & 1.02 & 1.89 & 42.99\\ 
\cline{2-17}
&  Vacuum &1.67 & 7.59 & 134.54 & 1.70 & 7.59 & 134.54 & 1.82 & 7.58 & 134.54 & 1.90 & 7.60 & 134.54 & 1.94 & 7.63 & 134.54\\ 
\cline{2-17}
&  Fax &0.04 & 0.10 & 1.55 & 0.04 & 0.10 & 1.55 & 0.04 & 0.10 & 1.55 & 0.05 & 0.10 & 1.56 & 0.05 & 0.10 & 1.56\\ 
\cline{2-17}
&  Personal computer &0.21 & 0.32 & 7.48 & 0.34 & 0.36 & 7.48 & 0.83 & 0.49 & 7.48 & 1.09 & 0.58 & 7.53 & 1.42 & 0.83 & 8.37\\ 
\cline{2-17}
&  Printer &0.07 & 0.30 & 7.78 & 0.08 & 0.31 & 7.78 & 0.09 & 0.31 & 7.78 & 0.10 & 0.31 & 7.78 & 0.11 & 0.31 & 7.83\\ 
\cline{2-17}
&  TV& 0.15 & 0.47 & 7.41 & 0.22 & 0.48 & 7.45 & 0.37 & 0.52 & 7.45 & 0.45 & 0.58 & 8.37 & 0.50 & 0.63 & 8.37\\ 
\cline{2-17}
&  VCR / DVD & 0.05 & 0.16 & 2.81 & 0.07 & 0.17 & 2.84 & 0.10 & 0.17 & 2.89 & 0.13 & 0.18 & 2.95 & 0.14 & 0.19 & 3.01\\ 
\cline{2-17}
&  TV Receiver box &0.03 & 0.11 & 2.12 & 0.05 & 0.11 & 2.21 & 0.08 & 0.12 & 2.32 & 0.10 & 0.13 & 2.40 & 0.11 & 0.14 & 2.42\\ 
\cline{2-17}
&  Hob &1.90 & 9.58 & 132.86 & 1.96 & 9.58 & 132.86 & 2.15 & 9.57 & 132.86 & 2.28 & 9.59 & 132.86 & 2.34 & 9.67 & 132.86\\ 
\cline{2-17}
&  Oven &1.50 & 3.91 & 96.19 & 1.58 & 3.92 & 96.19 & 1.74 & 3.94 & 96.19 & 1.85 & 3.97 & 96.19 & 1.91 & 4.07 & 98.51\\ 
\cline{2-17}
&  Microwave &1.13 & 4.23 & 82.73 & 1.20 & 4.24 & 82.73 & 1.26 & 4.24 & 82.73 & 1.29 & 4.27 & 83.17 & 1.31 & 4.29 & 83.57\\ 
\cline{2-17}
&  Kettle &0.55 & 2.71 & 63.59 & 0.59 & 2.71 & 63.59 & 0.72 & 2.73 & 63.87 & 0.83 & 2.76 & 64.22 & 1.02 & 2.79 & 64.22\\ 
\cline{2-17}
&  Small cooking (group)& 0.25 & 1.61 & 26.16 & 0.25 & 1.61 & 26.16 & 0.26 & 1.61 & 26.16 & 0.27 & 1.61 & 26.16 & 0.27 & 1.62 & 26.16\\ 
\cline{2-17}
&  Dish washer & 0.93 & 2.67 & 55.64 & 1.49 & 2.67 & 55.64 & 1.78 & 2.71 & 55.64 & 1.97 & 2.95 & 60.15 & 2.03 & 2.97 & 60.15\\ 
\cline{2-17}
&  Tumble dryer &2.57 & 8.05 & 152.33 & 3.93 & 8.16 & 154.99 & 5.24 & 8.20 & 155.08 & 6.30 & 8.33 & 155.08 & 7.01 & 8.68 & 155.08\\ 
\cline{2-17}
&  Washing machine &1.23 & 1.43 & 31.57 & 1.30 & 1.45 & 31.72 & 1.96 & 1.63 & 33.24 & 2.55 & 1.76 & 33.24 & 3.07 & 2.07 & 34.62\\ 
\cline{2-17}
&  Washer dryer &1.82 & 1.08 & 19.22 & 3.17 & 1.33 & 19.27 & 4.70 & 1.99 & 25.82 & 6.39 & 2.38 & 25.82 & 7.92 & 3.49 & 33.66\\ 
\cline{2-17}
&  E-INST &1.47 & 1.12 & 6.54 & 1.93 & 1.15 & 6.54 & 3.47 & 1.16 & 7.58 & 4.70 & 1.49 & 9.00 & 7.06 & 2.13 & 10.99\\ 
\cline{2-17}
&  Electric shower &2.13 & 14.78 & 249.24 & 2.16 & 14.78 & 249.24 & 2.28 & 14.78 & 249.24 & 2.34 & 14.78 & 249.24 & 2.38 & 14.80 & 249.24\\ 
\hline

\multirow{6}{*}
{
\begin{sideways}
\textbf{Passive app.}
\end{sideways}
}
&  DESWH & 3.34 & 14.01 & 249.29 & 4.04 & 14.04 & 251.01 & 6.13 & 14.06 & 253.21 & 7.83 & 14.23 & 255.20 & 10.85 & 14.57 & 257.76\\ 

\cline{2-17}
&  Storage heaters &3.22 & 0.32 & 3.96 & 5.64 & 0.56 & 6.95 & 20.20 & 1.99 & 24.87 & 30.45 & 4.23 & 41.48 & 30.45 & 4.23 & 41.48\\ 
  \cline{2-17}

&  Elec. space heating &1.64 & 0.85 & 6.14 & 2.86 & 1.07 & 7.54 & 7.49 & 2.15 & 13.03 & 8.50 & 2.49 & 14.57 & 10.06 & 4.08 & 26.25\\ 
  \cline{2-17}

&   Chest freezer &0.61 & 0.74 & 15.94 & 0.61 & 0.74 & 15.95 & 1.39 & 0.92 & 17.20 & 1.85 & 1.07 & 18.10 & 2.55 & 1.24 & 18.96\\ 
  \cline{2-17}

&  Fridge freezer & 0.91 & 0.39 & 7.56 & 0.91 & 0.40 & 7.61 & 2.19 & 0.95 & 8.67 & 2.94 & 1.25 & 10.58 & 4.07 & 1.61 & 11.69\\ 
  \cline{2-17}

&  Refrigerator &0.44 & 0.22 & 3.83 & 0.45 & 0.23 & 4.00 & 1.06 & 0.49 & 4.77 & 1.40 & 0.64 & 5.68 & 1.92 & 0.80 & 6.50\\ 
  \cline{2-17}

&  Upright freezer &0.67 & 0.39 & 8.37 & 0.67 & 0.39 & 8.42 & 1.63 & 0.80 & 9.09 & 2.16 & 1.03 & 10.99 & 2.98 & 1.31 & 11.98\\ 
\hline

\end{tabular}
\caption{$\varepsilon_{s}$ of different appliances in case of different $s$. $N=100$ and the sampling period is 10 min.}
\label{tab:devices}
\end{table*}
\end{landscape}

\begin{table*}
\footnotesize
\begin{tabular}{|c|l|l|l|l|l|l|l|l|l|l|l|l|}
\hline
& \multirow{2}{*}{\emph{Appliance}}  & \multirow{2}{*}{\# of users} & \multicolumn{2}{c|}{\rndadv} & \multicolumn{2}{c|}{\statadv} & \multicolumn{2}{c|}{\badv} & \multicolumn{2}{c|}{\bstatadv}\\ 
\cline{4-11}
& & & \emph{mean} & \emph{dev} & \emph{mean} & \emph{dev} &  \emph{mean} & \emph{dev} & \emph{mean} & \emph{dev} \\
\hline
\hline
\multirow{23}{*}
{
\begin{sideways}
\textbf{Active appliances}
\end{sideways}
}

& Lighting  & 2998 & 2.53 & 2.54 & 5.16 & 3.66 & 1.87 & 1.73 & \textbf{1.71} & \textbf{2.34}\\  
 \cline{2-11}
&  Cassette / CD Player  & 2650 & 4.70 & 4.09 & 3.34 & 3.70 & 3.96 & 2.50 & \textbf{3.19} & \textbf{3.46}\\ 
\cline{2-11}
&  Hi-Fi & 744 & 7.49 & 5.49 & 4.10 & 3.29 & 5.58 & 3.29 & \textbf{4.04} & \textbf{2.65}\\ 
\cline{2-11}
&  Iron  & 1247 & 6.53 & 4.40 & 3.89 & 3.19 & 3.62 & 2.94 & \textbf{2.95} & \textbf{2.28}\\ 
\cline{2-11}
&  Vacuum & 1192 & 6.61 & 4.47 & 4.00 & 3.22 & 3.54 & 3.02 & \textbf{2.92 }& \textbf{2.45}\\ 
\cline{2-11}
&  Fax & 241 & 6.85 & 5.66 & 7.78 & 4.81 & 5.76 & 3.26 & \textbf{4.19} & \textbf{2.85}\\
\cline{2-11}
&  Personal computer & 1970 & 5.35 & 4.50 & 5.32 & 4.55 & 4.79 & 3.20 & \textbf{4.03} & \textbf{3.49}\\ 
\cline{2-11}
&  Printer & 1608 & 6.21 & 5.06 & 5.73 & 4.69 & 4.71 & 3.01 & \textbf{4.07} & \textbf{3.04}\\ 
\cline{2-11}
&  TV & 2519 & 5.41 & 4.07 & 4.22 & 3.41 & 3.73 & 2.44 & \textbf{2.50} & \textbf{2.50}\\ 
\cline{2-11}
&  VCR / DVD & 2299 & 5.55 & 4.09 & 4.29 & 3.44 & 3.72 & 2.42 & \textbf{2.53} & \textbf{2.57}\\ 

\cline{2-11}
&  TV Receiver box & 2413 & 5.58 & 4.09 & 4.27 & 3.42 & 3.71 & 2.37 & \textbf{2.53} & \textbf{2.58}\\ 
\cline{2-11}
&  Hob & 857 & 6.53 & 4.49 & 3.64 & 3.19 & 3.55 & 2.89 & \textbf{2.95} &\textbf{2.48}\\ 
\cline{2-11}
&  Oven & 760 & 6.31 & 4.50 & 3.78 & 3.13 & 3.35 & 2.99 & \textbf{2.74} & \textbf{2.41}\\
\cline{2-11}
&  Microwave  & 505 & 6.41 & 4.24 & 3.96 & 3.17 & 3.39 & 2.97 & \textbf{2.90} & \textbf{2.44}\\ 
\cline{2-11}
&  Kettle  & 2808 & 4.81 & 4.13 & 3.62 & 3.84 & 3.83 & 2.67 & \textbf{3.29} & \textbf{3.48}\\ 
\cline{2-11}
&  Small cooking (group) & 1441 & 6.55 & 4.41 & 3.92 & 3.18 & 3.51 & 2.65 & \textbf{3.00} & \textbf{2.40}\\ 

\cline{2-11}
&  Dish washer  & 434 & 6.32 & 4.46 & 4.57 & 3.39 & 3.28 & 3.00 & \textbf{2.71} & \textbf{2.19}\\ 

\cline{2-11}
&  Tumble dryer  & 1018 & 5.79 & 4.15 & 4.32 & 3.37 & 2.23 & 2.56 & \textbf{2.03} & \textbf{2.57}\\ 

\cline{2-11}
&  Washing machine  & 2228 & 5.28 & 4.02 & 3.58 & 3.31 & 2.85 & 2.67 & \textbf{2.36} & \textbf{2.80}\\ 

\cline{2-11}
&  Washer dryer  & 417 & 5.05 & 3.77 & 3.26 & 3.07 & 1.94 & 2.21 & \textbf{1.79} & \textbf{2.66}\\ 

\cline{2-11}
&  E-INST  & 29 & 3.05 & 2.42 & 1.87 & 3.09 & 1.84 & 2.35 & \textbf{1.71} & \textbf{2.88}\\ 
 
\cline{2-11}
&  Electric shower & 1039 & 6.10 & 4.31 & 3.76 & 3.12 & 3.47 & 2.96 & \textbf{2.89} & \textbf{2.36}\\ 
\hline

\multirow{6}{*}
{
\begin{sideways}
\textbf{Passive app.}
\end{sideways}
}
&  DESWH  & 510 & 3.70 & 3.41 & 2.54 & 3.14 & \textbf{1.22} & \textbf{1.73} & 1.54 & 2.46\\ 

\cline{2-11}
&  Storage heaters  & 84 & 8.50 & 5.84 & \textbf{0.00} & \textbf{0.00} & 0.27 & 0.25 & \textbf{0.00} & \textbf{0.00}\\ 

  \cline{2-11}

&  Elec. space heating  & 73 & 6.52 & 5.05 & 6.75 & 5.02 & 2.85 & 3.42 & \textbf{2.14} & \textbf{3.14}\\ 
  \cline{2-11}

&   Chest freezer  & 466 & 0.56 & 0.47 & 0.51 & 0.44 & 0.42 & 0.32 & \textbf{0.40} & \textbf{0.34}\\
  \cline{2-11}

&  Fridge freezer  & 1954 & 0.49 & 0.42 & \textbf{0.28} & \textbf{0.30} & 0.34 & 0.29 & 0.34 & 0.33\\ 

  \cline{2-11}

&  Refrigerator  & 1301 & 0.56 & 0.48 & \textbf{0.35}& \textbf{0.39} & 0.40 & 0.33 & 0.41 & 0.42\\ 
  \cline{2-11}

&  Upright freezer  & 866 & 0.55 & 0.46 & \textbf{0.35} & \textbf{0.37} & 0.38 & 0.30 & 0.37 & 0.36\\ 
 \hline
\hline

\end{tabular}
\caption{Inference accuracy of starting slots. $N=100$, $T_p=10$ min, and "\# of users" means the number of users who 
have the given appliance in our dataset. The accuracy ($|t'-t_s|$) is given in hours.}
\label{tab:devices2}
\end{table*}

\end{document}